\newtheorem{prop}{Proposition}[section]
\newtheorem{defn}{Definition}[section]
\newtheorem{rem}{Remark}
\numberwithin{equation}{section}
\begin{document}
\newcommand{\beqa}{\begin{eqnarray}}
\newcommand{\eeqa}{\end{eqnarray}}
\newcommand{\thmref}[1]{Theorem~\ref{#1}}
\newcommand{\secref}[1]{Sect.~\ref{#1}}
\newcommand{\lemref}[1]{Lemma~\ref{#1}}
\newcommand{\propref}[1]{Proposition~\ref{#1}}
\newcommand{\corref}[1]{Corollary~\ref{#1}}
\newcommand{\remref}[1]{Remark~\ref{#1}}
\newcommand{\er}[1]{(\ref{#1})}
\newcommand{\nc}{\newcommand}
\newcommand{\rnc}{\renewcommand}

\nc{\cal}{\mathcal}

\nc{\goth}{\mathfrak}
\rnc{\bold}{\mathbf}
\renewcommand{\frak}{\mathfrak}
\renewcommand{\Bbb}{\mathbb}

\newcommand{\id}{\text{id}}
\nc{\Cal}{\mathcal}
\nc{\Xp}[1]{X^+(#1)}
\nc{\Xm}[1]{X^-(#1)}
\nc{\on}{\operatorname}
\nc{\ch}{\mbox{ch}}
\nc{\Z}{{\bold Z}}
\nc{\J}{{\mathcal J}}
\nc{\C}{{\bold C}}
\nc{\Q}{{\bold Q}}
\renewcommand{\P}{{\mathcal P}}
\nc{\N}{{\Bbb N}}
\nc\beq{\begin{equation}}
\nc\enq{\end{equation}}
\nc\lan{\langle}
\nc\ran{\rangle}
\nc\bsl{\backslash}
\nc\mto{\mapsto}
\nc\lra{\leftrightarrow}
\nc\hra{\hookrightarrow}
\nc\sm{\smallmatrix}
\nc\esm{\endsmallmatrix}
\nc\sub{\subset}
\nc\ti{\tilde}
\nc\nl{\newline}
\nc\fra{\frac}
\nc\und{\underline}
\nc\ov{\overline}
\nc\ot{\otimes}
\nc\bbq{\bar{\bq}_l}
\nc\bcc{\thickfracwithdelims[]\thickness0}
\nc\ad{\text{\rm ad}}
\nc\Ad{\text{\rm Ad}}
\nc\Hom{\text{\rm Hom}}
\nc\End{\text{\rm End}}
\nc\Ind{\text{\rm Ind}}
\nc\Res{\text{\rm Res}}
\nc\Ker{\text{\rm Ker}}
\rnc\Im{\text{Im}}
\nc\sgn{\text{\rm sgn}}
\nc\tr{\text{\rm tr}}
\nc\Tr{\text{\rm Tr}}
\nc\supp{\text{\rm supp}}
\nc\card{\text{\rm card}}
\nc\bst{{}^\bigstar\!}
\nc\he{\heartsuit}
\nc\clu{\clubsuit}
\nc\spa{\spadesuit}
\nc\di{\diamond}
\nc\cW{\cal W}
\nc\cG{\cal G}
\nc\al{\alpha}
\nc\bet{\beta}
\nc\ga{\gamma}
\nc\de{\delta}
\nc\ep{\epsilon}
\nc\io{\iota}
\nc\om{\omega}
\nc\si{\sigma}
\rnc\th{\theta}
\nc\ka{\kappa}
\nc\la{\lambda}
\nc\ze{\zeta}

\nc\vp{\varpi}
\nc\vt{\vartheta}
\nc\vr{\varrho}

\nc\Ga{\Gamma}
\nc\De{\Delta}
\nc\Om{\Omega}
\nc\Si{\Sigma}
\nc\Th{\Theta}
\nc\La{\Lambda}

\nc\boa{\bold a}
\nc\bob{\bold b}
\nc\boc{\bold c}
\nc\bod{\bold d}
\nc\boe{\bold e}
\nc\bof{\bold f}
\nc\bog{\bold g}
\nc\boh{\bold h}
\nc\boi{\bold i}
\nc\boj{\bold j}
\nc\bok{\bold k}
\nc\bol{\bold l}
\nc\bom{\bold m}
\nc\bon{\bold n}
\nc\boo{\bold o}
\nc\bop{\bold p}
\nc\boq{\bold q}
\nc\bor{\bold r}
\nc\bos{\bold s}
\nc\bou{\bold u}
\nc\bov{\bold v}
\nc\bow{\bold w}
\nc\boz{\bold z}

\nc\ba{\bold A}
\nc\bb{\bold B}
\nc\bc{\bold C}
\nc\bd{\bold D}
\nc\be{\bold E}
\nc\bg{\bold G}
\nc\bh{\bold H}
\nc\bi{\bold I}
\nc\bj{\bold J}
\nc\bk{\bold K}
\nc\bl{\bold L}
\nc\bm{\bold M}
\nc\bn{\bold N}
\nc\bo{\bold O}
\nc\bp{\bold P}
\nc\bq{\bold Q}
\nc\br{\bold R}
\nc\bs{\bold S}
\nc\bt{\bold T}
\nc\bu{\bold U}
\nc\bv{\bold V}
\nc\bw{\bold W}
\nc\bz{\bold Z}
\nc\bx{\bold X}

\nc\ca{\mathcal A}
\nc\cb{\mathcal B}
\nc\cc{\mathcal C}
\nc\cd{\mathcal D}
\nc\ce{\mathcal E}
\nc\cf{\mathcal F}
\nc\cg{\mathcal G}
\rnc\ch{\mathcal H}
\nc\ci{\mathcal I}
\nc\cj{\mathcal J}
\nc\ck{\mathcal K}
\nc\cl{\mathcal L}
\nc\cm{\mathcal M}
\nc\cn{\mathcal N}
\nc\co{\mathcal O}
\nc\cp{\mathcal P}
\nc\cq{\mathcal Q}
\nc\car{\mathcal R}
\nc\cs{\mathcal S}
\nc\ct{\mathcal T}
\nc\cu{\mathcal U}
\nc\cv{\mathcal V}
\nc\cz{\mathcal Z}
\nc\cx{\mathcal X}
\nc\cy{\mathcal Y}

\nc\e[1]{E_{#1}}
\nc\ei[1]{E_{\delta - \alpha_{#1}}}
\nc\esi[1]{E_{s \delta - \alpha_{#1}}}
\nc\eri[1]{E_{r \delta - \alpha_{#1}}}
\nc\ed[2][]{E_{#1 \delta,#2}}
\nc\ekd[1]{E_{k \delta,#1}}
\nc\emd[1]{E_{m \delta,#1}}
\nc\erd[1]{E_{r \delta,#1}}

\nc\ef[1]{F_{#1}}
\nc\efi[1]{F_{\delta - \alpha_{#1}}}
\nc\efsi[1]{F_{s \delta - \alpha_{#1}}}
\nc\efri[1]{F_{r \delta - \alpha_{#1}}}
\nc\efd[2][]{F_{#1 \delta,#2}}
\nc\efkd[1]{F_{k \delta,#1}}
\nc\efmd[1]{F_{m \delta,#1}}
\nc\efrd[1]{F_{r \delta,#1}}

\nc\fa{\frak a}
\nc\fb{\frak b}
\nc\fc{\frak c}
\nc\fd{\frak d}
\nc\fe{\frak e}
\nc\ff{\frak f}
\nc\fg{\frak g}
\nc\fh{\frak h}
\nc\fj{\frak j}
\nc\fk{\frak k}
\nc\fl{\frak l}
\nc\fm{\frak m}
\nc\fn{\frak n}
\nc\fo{\frak o}
\nc\fp{\frak p}
\nc\fq{\frak q}
\nc\fr{\frak r}
\nc\fs{\frak s}
\nc\ft{\frak t}
\nc\fu{\frak u}
\nc\fv{\frak v}
\nc\fz{\frak z}
\nc\fx{\frak x}
\nc\fy{\frak y}

\nc\fA{\frak A}
\nc\fB{\frak B}
\nc\fC{\frak C}
\nc\fD{\frak D}
\nc\fE{\frak E}
\nc\fF{\frak F}
\nc\fG{\frak G}
\nc\fH{\frak H}
\nc\fJ{\frak J}
\nc\fK{\frak K}
\nc\fL{\frak L}
\nc\fM{\frak M}
\nc\fN{\frak N}
\nc\fO{\frak O}
\nc\fP{\frak P}
\nc\fQ{\frak Q}
\nc\fR{\frak R}
\nc\fS{\frak S}
\nc\fT{\frak T}
\nc\fU{\frak U}
\nc\fV{\frak V}
\nc\fZ{\frak Z}
\nc\fX{\frak X}
\nc\fY{\frak Y}
\nc\tfi{\ti{\Phi}}
\nc\bF{\bold F}
\rnc\bol{\bold 1}

\nc\ua{\bold U_\A}

\nc\qinti[1]{[#1]_i}
\nc\q[1]{[#1]_q}
\nc\xpm[2]{E_{#2 \delta \pm \alpha_#1}}  
\nc\xmp[2]{E_{#2 \delta \mp \alpha_#1}}
\nc\xp[2]{E_{#2 \delta + \alpha_{#1}}}
\nc\xm[2]{E_{#2 \delta - \alpha_{#1}}}
\nc\hik{\ed{k}{i}}
\nc\hjl{\ed{l}{j}}
\nc\qcoeff[3]{\left[ \begin{smallmatrix} {#1}& \\ {#2}& \end{smallmatrix}
\negthickspace \right]_{#3}}
\nc\qi{q}
\nc\qj{q}

\nc\ufdm{{_\ca\bu}_{\rm fd}^{\le 0}}


\nc\isom{\cong} 

\nc{\pone}{{\Bbb C}{\Bbb P}^1}
\nc{\pa}{\partial}
\def\H{\mathcal H}
\def\L{\mathcal L}
\nc{\F}{{\mathcal F}}
\nc{\Sym}{{\goth S}}
\nc{\A}{{\mathcal A}}
\nc{\arr}{\rightarrow}
\nc{\larr}{\longrightarrow}

\nc{\ri}{\rangle}
\nc{\lef}{\langle}
\nc{\W}{{\mathcal W}}
\nc{\uqatwoatone}{{U_{q,1}}(\su)}
\nc{\uqtwo}{U_q(\goth{sl}_2)}
\nc{\dij}{\delta_{ij}}
\nc{\divei}{E_{\alpha_i}^{(n)}}
\nc{\divfi}{F_{\alpha_i}^{(n)}}
\nc{\Lzero}{\Lambda_0}
\nc{\Lone}{\Lambda_1}
\nc{\ve}{\varepsilon}
\nc{\phioneminusi}{\Phi^{(1-i,i)}}
\nc{\phioneminusistar}{\Phi^{* (1-i,i)}}
\nc{\phii}{\Phi^{(i,1-i)}}
\nc{\Li}{\Lambda_i}
\nc{\Loneminusi}{\Lambda_{1-i}}
\nc{\vtimesz}{v_\ve \otimes z^m}

\nc{\asltwo}{\widehat{\goth{sl}_2}}
\nc\ag{\widehat{\goth{g}}}  
\nc\teb{\tilde E_\boc}
\nc\tebp{\tilde E_{\boc'}}

\title[Generalized $q-$Onsager algebras and boundary Toda field theories]{Generalized $q-$Onsager algebras and\\ boundary affine Toda field theories}
\author{P. Baseilhac}
\address{Laboratoire de Math\'ematiques et Physique Th\'eorique CNRS/UMR 6083,
         F\'ed\'eration Denis Poisson, Universit\'e de Tours, Parc de Grammont, 37200 Tours, FRANCE}
         \email{baseilha@lmpt.univ-tours.fr}

\author{S. Belliard}
\address{Istituto Nazionale di Fisica Nucleare, Sezione di Bologna, Via Irnerio 46,  40126 Bologna,  Italy}
\email{belliard@bo.infn.it}

\begin{abstract}
Generalizations of the $q-$Onsager algebra are introduced and studied. In one of the simplest case and $q=1$, the algebra reduces to the one proposed by Uglov-Ivanov. In the general case and $q\neq 1$, an explicit algebra homomorphism associated with coideal subalgebras of quantum affine Lie algebras (simply and non-simply laced) is exhibited. Boundary (soliton non-preserving) integrable quantum Toda field theories are then considered in light of these results. For the first time, all defining relations for the underlying non-Abelian symmetry algebra are explicitely obtained. As a consequence, based on purely algebraic arguments all integrable (fixed or dynamical) boundary conditions are classified.    
\end{abstract}

\maketitle

\vskip -0.2cm

{\small MSC:\ 81R50;\ 81R10;\ 81U15;\ 81T40.}

{{\small  {\it \bf Keywords}:  $q-$Onsager algebra; Quantum group symmetry; Boundary affine Toda field theory}}

\section{Introduction}
In recent years, a new algebraic structure called the $q-$Onsager algebra (or equivalently the tridiagonal algebra) has emerged in different problems of mathematical physics.

On one side, it appears in the mathematical litterature of $P-$ and $Q-$polynomial association schemes  and their relationship with the Askey scheme of orthogonal polynomials \cite{Zhed,Ter01,GruHa,Ter1,Ter2}, related Jacobi matrices and, more generally, certain families of symmetric functions of one variable and related block tridiagonal matrices (see e.g. \cite{Ter3,TDpair}). 

On the other side, this algebra appears in several quantum integrable systems. Playing a crucial role at $q=1$ in the exact solution of the planar Ising \cite{Ons44} and superintegrable Potts model \cite{VoGR}, it also finds applications in solving the XXZ open spin chain with non-diagonal boundary parameters and generic deformation parameter $q$. Indeed, the transfer matrix of this model has been shown to admit an expansion in terms of the elements of the $q-$Onsager algebra \cite{BasK0,BasK1} acting on some finite dimensional representation. As a consequence, the solution of the model i.e. the complete spectrum and eigenstates can be derived using solely its representation theory, bypassing the Bethe ansatz approach which does not apply in the generic regime of parameters \cite{BasK2}. Appart from lattice models, in quantum field theory the $q-$Onsager algebra is known to be the hidden non-Abelian symmetry of the boundary sine-Gordon model \cite{B1,B2}. \vspace{1mm} 

By definition, the $q-$Onsager algebra is an associative algebra with unity generated by two elements (called
the standard generators), say $\textsf{A}_0,\textsf{A}_1$. Introducing the $q-$commutator\,\footnote{For further convenience, definitions for the parameter $q$ and the $q-$commutator chosen here differ compared to \cite{TDpair,BasK0,BasK1,BasK2}.}
$\big[X,Y\big]_q=XY-qYX$, the fundamental (sometimes called $q-$Dolan-Grady) relations take the form
\beqa
[\textsf{A}_0,[\textsf{A}_0,[\textsf{A}_0,\textsf{A}_1]_{q^2}]_{q^{-2}}]=\rho_0[\textsf{A}_0,\textsf{A}_1]\
,\qquad
[\textsf{A}_1,[\textsf{A}_1,[\textsf{A}_1,\textsf{A}_0]_{q^2}]_{q^{-2}}]=\rho_1[\textsf{A}_1,\textsf{A}_0]\
\label{Talg}\eeqa
where $q$ is a deformation parameter (assumed to be not a root of
unity) and $\rho_0,\rho_1$ are fixed scalars.
Note that for $\rho_0=\rho_1=0$ this algebra
reduces to the $q-$Serre relations of $U_q(\widehat{sl_2})$, and for $q=1$, $\rho_0=\rho_1=16$ it leads
to the Onsager algebra \cite{Ons44,Pe} defined by the Dolan-Grady relations \cite{DG}.\vspace{1mm}

Similarly to the well-established relationship between the Onsager algebra and the affine Lie algebra $\widehat{sl_2}$ \cite{Da,DateRoan}, the $q-$Onsager algebra (\ref{Talg}) is actually closely related with the $U_q(\widehat{sl_2})$ algebra, a fact that may be also expected from the structure of the l.h.s. of (\ref{Talg}) compared with the $q-$Serre relations of $U_q(\widehat{sl_2})$. Indeed, examples of algebra homomorphisms for the standard generators $\textsf{A}_0,\textsf{A}_1$ have been proposed for $\rho_0\neq 0,\rho_1\neq 0$, and related finite dimensional representations studied in details. We refer the reader to \cite{ITer,B2,AlCu,ITer2} for details. In particular, the following realization immediately follows from \cite{B2}:
\beqa
{\textsf A}_0&=& c_0e_0q^{h_0/2} + \overline{c}_0f_0q^{h_0/2} + \epsilon_0q^{h_0}\ ,\nonumber\\
{\textsf A}_1&=& c_1e_1q^{h_1/2} + \overline{c}_1f_1q^{h_1/2} + \epsilon_1q^{h_1} \ ,\label{realsl2}
\eeqa
where\,\footnote{Defining relations of $U_q(\widehat{sl_2})$ are given in the next section.} $\{h_i,e_i,f_i\}$ denote the generators of $U_q(\widehat{sl_2})$ and
one identifies $\rho_i=c_i\overline{c}_i(q+q^{-1})^2$ for $i=0,1$. Thanks to the Hopf algebra structure of $U_q(\widehat{sl_2})$, finite dimensional representations have been studied in details (see for instance \cite{TDpair,ITer2}). In addition, a new type of current algebra has been recently derived \cite{BasS} which rigorously establishes the isomorphism between the reflection equation algebra associated with $U_q(\widehat{sl_2})$ $R-$matrices and the $q-$Onsager algebra (\ref{Talg}).\vspace{1mm}

In the context of quantum integrable systems, the elements ${\textsf A}_0,{\textsf A}_1$ take the form of non local operators on the lattice or continuum. According to the model and objective considered, they are used either to eventually derive second order difference equations fixing the spectrum of the model \cite{BasK2}, or the complete set of scattering amplitudes of the fundamental particles \cite{MN98,DM,BasK3}.\vspace{1mm}

In view of all these results, finding an analogue of the deformed relations (\ref{Talg}) that may be related to {\it higher rank} affine Lie algebras in a similar manner, as well as considering potential implications for quantum integrable systems with extended symmetries seems to be a rather interesting problem. In the undeformed case $q=1$, a step towards this direction has been made by Uglov and Ivanov who introduced the so-called $sl_n-$Onsager's algebra for $n\geq 2$. However, to our knowledge since these results no further progress in this direction were ever published.\vspace{1mm}

In the present letter, we remedy this situation. Namely, to each affine Lie algebra (of classical or exceptional type) ${\widehat g}$ we associate a $q-$Onsager algebra denoted $O_q({\widehat g})$. Then, by analogy with the ${\widehat{sl_2}}$ case, we propose an algebra homomorphism from $O_q({\widehat g})$ to the coideal subalgebra\,\footnote{For definitions, see e.g. \cite{Mol,Letzter}} of $U_q({\widehat g})$ generalizing (\ref{realsl2}). Applications to boundary quantum affine Toda field theories introduced in \cite{Fring,Cor} - with soliton non-preserving boundary conditions - are then considered. Despite of the fact that defining relations of the underlying hidden symmetry in these models were {\it not} known up to now (except for the sine-Gordon model \cite{B1,B2}), the explicit knowledge of non-local conserved charges have provided a powerful tool to construct boundary reflection matrices at least for ${\widehat g}\equiv a_n^{(1)},d_n^{(1)}$ cases \cite{MN98,DM,DelG}. Here and for the first time, we show that each boundary affine Toda field theory of the family defined in \cite{Fring,Cor} associated with ${\widehat g}$ enjoys a hidden non-Abelian symmetry of type $O_q({\widehat g})$. As a consequence, all known scalar integrable boundary conditions \cite{Cor} simply follow from the algebraic structure, with no reference to its representation theory\,\footnote{Contrary to previous works, which are representation's dependent.}. More generally, all possible integrable dynamical boundary conditions (additional degrees of freedom are located at the boundary) admissible in these models are also classified according to this new framework, generalizing the results of the boundary sine-Gordon model with dynamical boundary conditions \cite{BasDel,BasK3}.\vspace{2mm}

\section{Generalizations of the $q-$Onsager algebra}
As mentionned in the introduction, generalized $q-$Onsager algebras can be introduced by analogy with (\ref{Talg}). Having in mind the structure of $q-$Serre relations for higher rank affine Lie algebras and their potential relations with coideal subalgebras of quantum affine algebras, a general formulation can be proposed.
\begin{defn}{\label{defqOns}} Let $\{a_{ij}\}$ be the extended Cartan matrix of the affine Lie algebra $\widehat{g}$ with Dynkin diagram reported in Appendix A.
Fix coprime integers $d_i$ such that $d_ia_{ij}$ is symmetric.
The generalized $q-$Onsager algebra $O_q(\widehat{g})$ is an associative algebra
with unit $1$, elements $\textsf{A}_i$ and scalars $\rho^{k}_{ij},\gamma^{kl}_{ij}\in {\mathbb C}$ with $i,j \in \{0,1,...,n\}$, $k \in \{0,1,...,[-\frac{a_{ij}}{2}]-1\}\,\footnote{$[a]$ means the nearest  higher integer of $a$ with [1/2]=1.} $ and $l \in \{0,1,...,-a_{ij}-1-2\,k\}$ ($k$ and $l$ are positive integer). 
The defining relations are :
\begin{align}
\sum_{r=0}^{1-a_{ij}}(-1)^r
\left[ \begin{array}{c}
1-a_{ij} \\
r 
\end{array}\right]_{q_{i}}
\textsf{A}_i^{1-a_{ij}-r}\textsf{A}_j\,\textsf{A}_i^r=
\sum_{k=0}^{[-\frac{a_{ij}}{2}]-1}\rho^{k}_{ij}\sum_{l=0}^{-2\,k-a_{ij}-1}\,(-1)^l\,\gamma^{kl}_{ij}\,
\textsf{A}_i^{-2\,k-a_{ij}-1-l}\textsf{A}_j\,\textsf{A}_i^l\ ,\label{genqOns}
\end{align}
where the constants $\gamma^{kl}_{ij}$ are such that:
\begin{align}
&\qquad\qquad\qquad\mbox{For} \qquad a_{ij}=a_{ji}=-1\quad : \quad \gamma^{00}_{ij}=\gamma^{00}_{ji}=1\ ;\qquad\qquad\qquad\qquad\qquad\qquad\qquad
\qquad  \qquad  \qquad  \qquad  \qquad  \qquad   \nonumber\\
&\qquad\qquad\qquad\mbox{For} \qquad a_{ij}=-1  \qquad  \mbox{and}  \qquad a_{ji}=-2\ \ : \quad \gamma^{00}_{ij}=\gamma^{00}_{ji}=\gamma^{01}_{ji}=1\ ; \qquad\qquad\qquad\qquad\qquad\quad\qquad\quad \quad  \quad  \quad \nonumber\\
&\qquad\qquad\qquad\mbox{For} \qquad a_{ij}=-1  \qquad  \mbox{and}  \qquad a_{ji}=-3 \quad : \quad 
\gamma^{00}_{ij}=1
, \qquad  \gamma^{00}_{ji}= \gamma^{02}_{ji}=\gamma^{10}_{ji}=1\ ,\qquad \qquad\qquad\qquad\qquad\quad \ \!\nonumber\\
&\qquad\qquad\qquad\qquad\qquad\qquad\qquad\qquad\qquad\qquad\qquad\qquad\qquad\gamma^{01}_{ji}= \frac{(q+q^{-1})(q^2+q^{-2})(q^2 + 3 + q^{-2})}
{(q^4 + 2 q^2 + 4 + 2 q^{-2} + q^{-4})}\ ;\qquad\nonumber\\
&\qquad\qquad\qquad\mbox{For} \qquad a_{ij}=-1  \qquad  \mbox{and}  \qquad a_{ji}=-4 \quad : \quad 
\gamma^{00}_{ij}=1,
\qquad  \gamma^{00}_{ji}= \gamma^{03}_{ji}= \gamma^{1l}_{ji}=1\ ,\qquad \ \
\nonumber\\
&\qquad\qquad\qquad\qquad\qquad\qquad\qquad\qquad\qquad\qquad\qquad\qquad\qquad\gamma^{01}_{ji}=\gamma^{02}_{ji}=\frac{[3]_q[5]_q}{q^4+q^{-4}+3} \ .\qquad  \qquad \qquad \qquad
\nonumber
\end{align}
\end{defn}
\vspace{-3mm}
\begin{rem}
For ${\widehat g}\equiv a_n^{(1)}$, $q=1$ and $\rho^0_{ij}=1$, the relations reduce to the ones of Uglov-Ivanov's $sl_n-$Onsager's algebra \cite{Uglov}. For ${\widehat g}\equiv a_n^{(1)}$ and $q\neq 1$, the relations already appeared in \cite{B1} without detailed explanations. For simply laced cases, note the close relationship with the defining relations of coideal subalgebras or the non-standard deformation of finite dimensional Lie algebras \cite{Letzter,Gavr,Klim}. 
\end{rem} 
\vspace{2mm}

For $q\neq 1$, an explicit relationship with coideal subalgebras of $U_q({\widehat g})$ can be easily exhibited. To this end, let us first recall some definitions that will be useful below. Define for $q\in {\mathbb C}^*$
\begin{align}
\left[ \begin{array}{c}
a \\
b 
\end{array}\right]_q
=\frac{[a]_q!}{[b]_q!\,[a-b]_q!}\ , \qquad
[a]_q!=[a]_q\,[a-1]_q \dots [1]_q\ ,\qquad
[a]_q=\frac{q^a-q^{-a}}{q-q^{-1}}, \quad [0]_q=1 \ .\nonumber
\end{align}

\begin{defn}{\label{defUq}}\cite{J1} Let $\{a_{ij}\}$ be the extended Cartan matrix of the affine Lie algebra $\widehat{g}$ with Dynkin diagram given in Appendix A. Fix coprime integers $d_i$ such that $d_ia_{ij}$ is symmetric.
$U_q(\widehat{g})$ is an associative algebra over ${\mathbb C}$ with unit $1$ generated by the elements
$\{ e_i, f_i,q_i^{\pm \frac{h_i}{2}}\}$, $i \in 0 \dots n$ subject to the relations:

\begin{align}
q_i^{\pm\frac{h_i}{2}}q_i^{\mp \frac{h_i}{2}}=1,
\qquad q_i^{\frac{h_i}{2}}q_j^{\frac{h_j}{2}}=q_j^{\frac{h_j}{2}}q_i^{\frac{h_i}{2}}\ , \qquad \qquad \qquad \qquad \qquad \qquad \qquad \qquad \qquad \quad \nonumber\\
q_i^{\frac{h_i}{2}}\,e_j\,q_i^{-\frac{h_i}{2}}=q_i^{\frac{a_{ij}}{2}}\,e_j,
\qquad q_i^{\frac{h_i}{2}}\,f_j\,q_i^{-\frac{h_i}{2}}=q_i^{-\frac{a_{ij}}{2}}\,f_j\ ,\qquad
\null [e_i,f_j]=\delta_{ij}\frac{q_i^{h_i}-q_i^{-h_i}}{q_i-q_i^{-1}}\ ,\nonumber\\
e_i\,e_j=e_j\,e_i\ , \qquad f_i\,f_j=f_j\,f_i\ , \,\qquad \mbox{for} \qquad |i-j|>1\ ,\qquad \qquad \qquad \qquad \qquad \quad \nonumber\\
\sum_{r=0}^{1-a_{ij}}(-1)^r
\left[ \begin{array}{c}
1-a_{ij} \\
r 
\end{array}\right]_{q_i}
e_i^{1-a_{ij}-r}\,e_j\,e_i^r=0\ , \qquad \qquad \qquad \qquad \qquad \qquad \qquad \qquad \quad \nonumber\\
\sum_{r=0}^{1-a_{ij}}(-1)^r
\left[ \begin{array}{c}
1-a_{ij} \\
r 
\end{array}\right]_{q_i}
f_i^{1-a_{ij}-r}\,f_j\,\,f_i^r=0\ .\quad \ \ \qquad \qquad \qquad \qquad \qquad \qquad \qquad \quad \nonumber
\end{align}
\vspace{1mm}
The Hopf algebra structure is ensured by the existence of a
comultiplication $\Delta: U_{q}(\widehat{g})\mapsto U_{q}(\widehat{g})\otimes U_{q}(\widehat{g})$, antipode ${\cal S}: U_{q}(\widehat{g})\mapsto U_{q}(\widehat{g})$ 
and a counit ${\cal E}: U_{q}(\widehat{g})\mapsto {\mathbb C}$ with
\beqa \Delta(e_i)&=&e_i\otimes q_i^{-h_i/2} +
q_i^{h_i/2}\otimes e_i\ ,\nonumber \\
\Delta(f_i)&=&f_i\otimes q_i^{-h_i/2} + q_i^{h_i/2}\otimes f_i\ ,\nonumber\\
\Delta(h_i)&=&h_i\otimes I\!\!I + I\!\!I \otimes h_i\ ,\label{coprod}
\eeqa
\beqa {\cal S}(e_i)=-e_iq_i^{-h_i}\ ,\quad {\cal S}(f_i)=-q_i^{h_i}f_i\ ,\quad {\cal S}(h_i)=-h_i \ ,\qquad {\cal S}({I\!\!I})=1\
\label{antipode}\nonumber\eeqa
and\vspace{-0.3cm}
\beqa {\cal E}(e_i)={\cal E}(f_i)={\cal
E}(h_i)=0\ ,\qquad {\cal E}({I\!\!I})=1\
.\label{counit}\nonumber\eeqa
\end{defn}
\vspace{3mm}

Based on the realization (\ref{realsl2}) of the algebra (\ref{Talg}) for the simplest case ${\widehat{sl_2}}\equiv a_1^{(1)}$, and the results in \cite{DM,DelG} it looks rather natural to consider the following realizations for the generalized $q-$Onsager algebras.
\begin{prop}
Let $\{c_i,\overline{c}_i\}\in {\mathbb C}$ and $\{w_i\}\in{\mathbb C}^*$. There is an algebra homomorphism $\ O_q(\widehat{g})\ \rightarrow U_q(\widehat{g})$ such that
\begin{align}
\textsf{A}_i=c_i\,e_iq_i^{\frac{h_i}{2}} +\overline{c}_i\,f_iq_i^{\frac{h_i}{2}} + w_i q_i^{h_i}\ \label{realg}
\end{align}
iff the parameters $w_i$ are subject to the following constraints: 
\begin{align}
\mbox{For}\qquad {\widehat{g}}=a_n^{(1)} (n>1), d_n^{(1)}, e_6^{(1)},e_7^{(1)},e_8^{(1)}: 
\qquad  \Big\{\begin{array}{c} w_i\,\Big(w_j^2+\frac{c_j\,\overline{c}_j}{q+q^{-1}-2}\Big)=0 \\
w_j\,\Big(w_i^2+\frac{c_i\,\overline{c}_i}{q+q^{-1}-2}\Big)=0
\end{array}
\qquad \mbox{where} \quad i,j \ \mbox{are simply linked}\ .\quad \nonumber
\end{align}
\begin{align}
\mbox{For}\qquad {\widehat{g}}=b_n^{(1)}, c_n^{(1)}, a_{2n}^{(2)},a_{2n-1}^{(2)},d_{n+1}^{(2)},, e_6^{(2)},f_4^{(1)}: \qquad \qquad  \qquad \qquad  \qquad \qquad   \qquad \qquad\qquad \qquad \qquad \qquad \nonumber\\
\qquad   w_j\,\Big(w_i^2+\frac{c_i\,\overline{c}_i}{q_i+q_i^{-1}-2}\Big)=0 \
\qquad \mbox{if} \quad i,j \ \mbox{are doubly linked with $i$ the longest root}\ ;\nonumber\\
\qquad  \Big\{\begin{array}{c} w_i\,\Big(w_j^2+\frac{c_j\,\overline{c}_j}{q_j+q_j^{-1}-2}\Big)=0 \\
w_j\,\Big(w_i^2+\frac{c_i\,\overline{c}_i}{q_i+q_i^{-1}-2}\Big)=0
\end{array}
\qquad \mbox{if} \quad i,j \ \mbox{are simply linked}\ . \qquad  \qquad \qquad  \qquad \qquad\nonumber
\end{align}
\begin{align}
\mbox{For}\qquad {\widehat{g}}=g_2^{(1)}, d_{4}^{(3)}: \qquad \qquad\qquad\qquad \qquad \qquad\qquad\qquad\qquad\qquad \qquad\quad \qquad\ \qquad\qquad\qquad\qquad\qquad\qquad\qquad\qquad\qquad\nonumber\\
\qquad  \Big\{\begin{array}{c} w_j\Big(w_i^2+\frac{c_i\,\overline{c}_i}{(q_i+q_i^{-1}-2)}\Big)=0 \\
w_i\Big(w_j^2+\frac{c_j\,\overline{c}_j}{(q_j+q_j^{-1}-2)}\Big)\Big(w_j^2+\frac{c_j\,\overline{c}_j(q_j+q_j^{-1}-1)^2}{(q_j+q_j^{-1}-2)}\Big)=0
\end{array}
\qquad \mbox{if} \quad i,j \ \mbox{are triply linked with $i$ the longest root}\ .\qquad \qquad\quad\nonumber\\
\qquad  \Big\{\begin{array}{c} w_i\,\Big(w_j^2+\frac{c_j\,\overline{c}_j}{q_j+q_j^{-1}-2}\Big)=0 \\
w_j\,\Big(w_i^2+\frac{c_i\,\overline{c}_i}{q_i+q_i^{-1}-2}\Big)=0
\end{array}
\qquad \mbox{if} \quad i,j \ \mbox{are simply linked}\ . \qquad\qquad\qquad\qquad\qquad\qquad \qquad\nonumber
\end{align}
\begin{align}
\mbox{For}\qquad {\widehat{g}}=a_2^{(2)}:\qquad
w_j\,\Big(w_i^2+\frac{c_i\,\overline{c}_i}{(q_i+q_i^{-1}-2)}\Big)=0 \ \qquad \quad \mbox{with $i$ the longest root}\ .\qquad\qquad\qquad\qquad\qquad\qquad\qquad\qquad\qquad\qquad\qquad\nonumber
\end{align}
\end{prop}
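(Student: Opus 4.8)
The plan is to verify directly that the elements $\textsf{A}_i$ defined by \er{realg} satisfy the generalized $q$-Onsager relations \er{genqOns} inside $U_q(\widehat{g})$, and conversely that satisfying these relations forces the stated constraints on the $w_i$. Since \er{genqOns} involves only the pair $(\textsf{A}_i,\textsf{A}_j)$ for a single edge $\{i,j\}$ of the Dynkin diagram (the commuting relations for $|i-j|>1$ being automatic because the corresponding $e$'s, $f$'s and Cartan elements of $U_q(\widehat{g})$ commute and have disjoint weights), the whole computation reduces to a case analysis over the possible local configurations $a_{ij},a_{ji}\in\{-1,-1\},\{-1,-2\},\{-1,-3\},\{-1,-4\},\{0,-2\},\dots$ appearing in the affine Dynkin diagrams of Appendix A. So the first step is to record, for each such local configuration, the rank-two subalgebra of $U_q(\widehat{g})$ generated by $\{e_i,f_i,q_i^{\pm h_i/2},e_j,f_j,q_j^{\pm h_j/2}\}$, which is a quotient of $U_q$ of the corresponding rank-two Kac–Moody type ($A_1\times A_1$, $A_2$, $B_2$, $G_2$, or the twisted rank-two pieces).

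The second, and computationally central, step is the explicit expansion of the left-hand side $S_{ij}:=\sum_{r=0}^{1-a_{ij}}(-1)^r\qcoeff{1-a_{ij}}{r}{q_i}\textsf{A}_i^{1-a_{ij}-r}\textsf{A}_j\,\textsf{A}_i^r$. The key observation is that the ``Serre part'' of $\textsf{A}_i$, namely $c_i e_i q_i^{h_i/2}+\overline c_i f_i q_i^{h_i/2}$, together with the Cartan part $w_i q_i^{h_i}$, can be handled by commuting $q_i^{h_i}$ past everything using the $q$-analog relations $q_i^{h_i/2}e_j q_i^{-h_i/2}=q_i^{a_{ij}/2}e_j$, etc.; each monomial $\textsf{A}_i^{1-a_{ij}-r}\textsf{A}_j\textsf{A}_i^r$ expands into terms that are words in $e_i,f_i,e_j,f_j$ dressed by Cartan factors, and the binomial sum over $r$ collapses most of them via the quantum Serre identity already built into $U_q(\widehat{g})$ (the $e_i$-only and $f_i$-only Serre combinations vanish). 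What survives on the right are precisely the ``lower-order'' words $\textsf{A}_i^{m-l}\textsf{A}_j\textsf{A}_i^l$ with $m=-2k-a_{ij}-1<1-a_{ij}$, with coefficients that are polynomials in $c_i,\overline c_i,w_i,w_j$ and $q_i$; matching these against $\sum_k\rho^k_{ij}\sum_l(-1)^l\gamma^{kl}_{ij}\textsf{A}_i^{m-l}\textsf{A}_j\textsf{A}_i^l$ pins down the $\gamma^{kl}_{ij}$ (which must be the universal constants listed in Definition~\ref{defqOns}, independent of the free parameters) and produces the $\rho^k_{ij}$ as functions of $c_i,\overline c_i,w_i$ — but only provided certain ``obstruction'' coefficients, namely the ones multiplying monomials that cannot be written in the allowed form $\textsf{A}_i^{m-l}\textsf{A}_j\textsf{A}_i^l$, vanish identically. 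Those obstruction coefficients are exactly the expressions $w_j\big(w_i^2+\tfrac{c_i\overline c_i}{q_i+q_i^{-1}-2}\big)$ and their triple-link analogue $w_i\big(w_j^2+\tfrac{c_j\overline c_j}{q_j+q_j^{-1}-2}\big)\big(w_j^2+\tfrac{c_j\overline c_j(q_j+q_j^{-1}-1)^2}{q_j+q_j^{-1}-2}\big)$ appearing in the statement; setting them to zero is simultaneously necessary (the monomials in question are linearly independent in $U_q(\widehat{g})$, e.g. by the PBW theorem) and sufficient.

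To organize the computation I would treat the generic simply-laced edge $(a_{ij}=a_{ji}=-1)$ first — here $S_{ij}$ has only the $r=0,1,2$ terms, the right-hand side has the single term $\rho^0_{ij}[\textsf{A}_i,[\textsf{A}_i,\textsf{A}_j]]$-type structure with $\gamma^{00}_{ij}=1$, and one recovers the $\widehat{sl_2}$ computation of \er{realsl2}/\er{Talg} essentially verbatim on the rank-two $A_2$ subalgebra — and then escalate to $a_{ji}=-2$ ($B_2$ subalgebra, whence $\gamma^{01}_{ji}=1$ and the doubly-linked constraint with $i$ the long root), $a_{ji}=-3$ (the $G_2$ subalgebra, responsible for the triply-linked constraint and for the two nontrivial rational $\gamma^{01}_{ji}$), and finally $a_{ji}=-4$ (occurring for $a_2^{(2)}$, giving the single constraint in the last display). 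In each case the verification of sufficiency is a finite, if lengthy, normal-ordering computation in a rank-two quantum group; necessity comes from isolating one explicit offending PBW monomial and reading off its coefficient. The main obstacle is purely bookkeeping: controlling the combinatorial explosion in the $G_2$ and $a_2^{(2)}$ cases, where $1-a_{ij}$ can be as large as $5$ and the sums $\sum_k\sum_l$ on the right have several terms, so that organizing the $q$-binomial cancellations cleanly — rather than any conceptual difficulty — is what the proof really costs.
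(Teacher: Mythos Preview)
Your proposal is correct and follows essentially the same approach as the paper: the paper's proof is the one-line ``plugging \er{realg} into the relations of Definition~\ref{defqOns}, straightforward calculations leave few unwanted terms that cancel provided the above constraints on parameters $w_i$ are satisfied,'' followed by a list of the resulting structure constants $\rho^k_{ij}$. Your edge-by-edge reduction to rank-two subalgebras, use of the $q$-Serre relations to kill the top-degree terms, and identification of the obstruction coefficients (with PBW for necessity) is exactly the content of that ``straightforward calculation,'' spelled out in more detail than the paper bothers to give.
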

\vspace{3mm}
\begin{proof} Plugging (\ref{realg}) into the relations of Definition 2.1, straightforward calculations leave few unwanted terms that cancel provided the above constraints on parameters $w_i$ are satisfied. 
The structure constants $\rho^k_{ij}$ - with respect to the indices $i,j$ - are identified as follows:
\begin{align}
&\qquad\mbox{For}\qquad {\widehat{g}}=a_n^{(1)} (n>1), d_n^{(1)}, e_6^{(1)},e_7^{(1)},e_8^{(1)}: \qquad \rho^0_{ij}=c_i\,\overline{c}_i \quad \mbox{and}\quad \rho^0_{ji}=c_j\,
\overline{c}_j\ .\qquad\qquad \nonumber\\
&\qquad\mbox{For}\qquad {\widehat{g}}=b_n^{(1)}, c_n^{(1)}, a_{2n}^{(2)},a_{2n-1}^{(2)},d_{n+1}^{(2)},f_4^{(1)}: \qquad\qquad \qquad\qquad\qquad\qquad\qquad\qquad\qquad\qquad\qquad
\qquad\qquad\qquad\qquad\nonumber\\
&\qquad\quad\rho^0_{ij}=c_i\,\overline{c}_i \quad \mbox{and} \quad \rho^0_{ji}=c_j\,\overline{c}_j(q+q^{-1})^2 \qquad \mbox{if} \quad i,j \ \mbox{are doubly linked with $i$ the longest root}\ ;\qquad\qquad\qquad\quad\ \nonumber\\
&\qquad\quad\rho^0_{ij}=c_i\,\overline{c}_i \quad \mbox{and}\quad \rho^0_{ji}=c_j\,\overline{c}_j\ 
\qquad\qquad \qquad \mbox{if} \quad i,j \ \mbox{are simply linked}\ . \quad \ \ \qquad\qquad\qquad\qquad  \qquad \qquad  \qquad \qquad\nonumber\\
&\qquad\mbox{For}\qquad {\widehat{g}}=g_2^{(1)}, d_{4}^{(3)}: \nonumber\\
&\qquad\quad \rho^0_{ij}=c_i\,\overline{c}_i, \quad \rho^0_{ji}=c_j\,\overline{c}_j (q^4+2\,q^2+4+2\,q^{-2}+q^{-4})\quad \mbox{and}\quad \rho^1_{ji}=-c^2_j\,\overline{c}^2_j\,(q^4+1+q^{-4})^2\,
\nonumber\\
&\qquad\qquad\mbox{if} \quad i,j \ \mbox{are triply linked with $i$ the longest root}\ ;\qquad\qquad\qquad\qquad\qquad\qquad\qquad\qquad\qquad\qquad\nonumber\\
&\qquad\quad\rho^0_{ij}=c_i\,\overline{c}_i \quad \mbox{and}\quad  \rho^0_{ji}=c_j\,\overline{c}_j
\qquad\mbox{if} \quad i,j \ \mbox{are simply linked}\ . \qquad\qquad\qquad\qquad\qquad\qquad\qquad \qquad\qquad \qquad \qquad\qquad \qquad  \qquad \qquad\nonumber\\
&\qquad\mbox{For}\qquad {\widehat{g}}=a_2^{(2)}:\nonumber\\
&\qquad\quad  \rho^0_{ij}=c_i\,\overline{c}_i, \quad  
\rho^0_{ji}=c_j\,\overline{c}_j(q+q^{-1})^2(q^4+3+q^{-4})\quad \mbox{and} \quad \rho^1_{ji}=-c^2_j\,\overline{c}^2_j(q+q^{-1})^4(q^2+q^{-2})^4\ \qquad\nonumber\\
&\qquad\qquad\qquad \mbox{with $i$ the longest root}\ . \qquad\qquad\qquad\qquad\qquad\qquad\qquad\qquad\qquad
\nonumber
\end{align}
\end{proof}

\begin{rem}
All the structure constants are invariant by the change $q\to q^{-1}$, which yields to the obvious  realization  $\textsf{A}_i=c_i\,e_iq_i^{-\frac{h_i}{2}} +\overline{c}_i\,f_iq_i^{-\frac{h_i}{2}} + w_i q_i^{-h_i}$.
\end{rem}

Quantum affine algebras $U_q(\widehat{g})$ are known to be Hopf algebras, thanks to the existence of a coproduct, counit and antipode actions. For generalized $q-$Onsager algebras $O_q(\widehat{g})$, a coaction map \cite{Chari} can be introduced:
\begin{prop} 
Let $c_i,{\overline c}_i\in{\mathbb C}$. The generalized q-Onsager algebra $O_q(\widehat{g})$ is a left $U_{q}(\widehat{g})-$comodule algebra with coaction map $\delta: O_q(\widehat{g})\rightarrow U_q(\widehat{g})\otimes O_q(\widehat{g})$
such that
\beqa
\delta(\textsf{A}_i)=  (c_i\,e_iq_i^{\frac{h_i}{2}} +\overline{c}_i\,f_iq_i^{\frac{h_i}{2}})\otimes I\!\!I + q_i^{h_i} \otimes \textsf{A}_i \ .\label{coactionmap}
\eeqa
\end{prop}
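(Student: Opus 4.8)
The plan is to verify that the proposed map $\delta$ is well defined (i.e.\ descends to the algebra $O_q(\widehat g)$, not just the free algebra on the $\textsf A_i$) and that it satisfies the coassociativity and counit axioms of a left comodule algebra. First I would note that $\delta$ is defined on generators by \er{coactionmap} and extended as an algebra homomorphism $O_q(\widehat g)\to U_q(\widehat g)\otimes O_q(\widehat g)$; the point requiring proof is that the images $\delta(\textsf A_i)$ satisfy the defining relations \er{genqOns} of $O_q(\widehat g)$. Here the key observation is that $\delta(\textsf A_i)$ is, up to the trivial tensor factor, exactly the element obtained by applying the coproduct $\Delta$ of $U_q(\widehat g)$ to the realization \er{realg}: writing $\Psi_i=c_i e_iq_i^{h_i/2}+\overline c_i f_i q_i^{h_i/2}+w_iq_i^{h_i}\in U_q(\widehat g)$ from the Proposition, one checks directly from \er{coprod} that $\Delta(\Psi_i)=\bigl(c_ie_iq_i^{h_i/2}+\overline c_if_iq_i^{h_i/2}\bigr)\otimes I\!\!I + q_i^{h_i}\otimes\Psi_i$, which is precisely \er{coactionmap} with $\textsf A_i$ on the right replaced by $\Psi_i$ and on the overall map identified via the homomorphism of the Proposition. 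Since $\Delta$ is an algebra homomorphism and the $\Psi_i$ satisfy the relations \er{genqOns} (this is the content of the Proposition, for the appropriate $\rho,\gamma$), their images under $\Delta$ satisfy the same relations; hence $\delta$ respects \er{genqOns} and is well defined. Strictly speaking one wants this for the universal/generic choice of the constants $c_i,\overline c_i$ with $w_i$ unconstrained, so I would instead argue directly: the relation \er{genqOns} is a noncommutative polynomial identity in $\textsf A_i,\textsf A_j$, and applying $\delta$ gives a polynomial identity in $U_q(\widehat g)\otimes O_q(\widehat g)$; expanding via the algebra homomorphism property and using that the first tensor leg $q_i^{h_i}$ commutes past $q_j^{h_j}$ while $e_i,f_i$ pick up the appropriate powers of $q$ when moved past $q_j^{h_j/2}$ (the relations of Definition \ref{defUq}), one reduces the check to the identity \er{genqOns} holding in $O_q(\widehat g)$ itself on the right leg together with the $q$-Serre relation holding for $e_i$ (equivalently the $\rho=0$ specialization) on the left leg — both of which are available.

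Next I would check the comodule axioms. Coassociativity $(\Delta\otimes\id)\circ\delta=(\id\otimes\delta)\circ\delta$ and counit $(\ce\otimes\id)\circ\delta=\id$ need only be verified on the generators $\textsf A_i$, since all maps in sight are algebra homomorphisms and the $\textsf A_i$ generate $O_q(\widehat g)$. For the counit: applying $\ce$ to the first leg of \er{coactionmap} and using $\ce(e_i)=\ce(f_i)=0$, $\ce(q_i^{h_i})=1$ (from \er{counit} and the grouplike property of $q_i^{h_i}$) immediately returns $\textsf A_i$. For coassociativity: apply $\Delta\otimes\id$ to \er{coactionmap} using $\Delta(e_iq_i^{h_i/2})=e_iq_i^{h_i/2}\otimes q_i^{h_i/2}\cdot q_i^{-h_i/2}\otimes\ldots$ — more carefully, $\Delta(e_iq_i^{h_i/2})=e_iq_i^{h_i/2}\otimes I\!\!I + q_i^{h_i/2}\otimes e_iq_i^{h_i/2}$ and similarly for $f_iq_i^{h_i/2}$, while $\Delta(q_i^{h_i})=q_i^{h_i}\otimes q_i^{h_i}$; applying $\id\otimes\delta$ to \er{coactionmap} and substituting \er{coactionmap} again, one finds both sides equal to $(c_ie_iq_i^{h_i/2}+\overline c_if_iq_i^{h_i/2})\otimes I\!\!I\otimes I\!\!I + q_i^{h_i}\otimes(c_ie_iq_i^{h_i/2}+\overline c_if_iq_i^{h_i/2})\otimes I\!\!I + q_i^{h_i}\otimes q_i^{h_i}\otimes\textsf A_i$. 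This is a short direct computation.

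The main obstacle is the well-definedness step, namely confirming that \er{genqOns} is preserved by $\delta$ for all admissible affine types and all the link-type cases (simply, doubly, triply linked) with their specific $\gamma^{kl}_{ij}$ constants: this is exactly the same combinatorial bookkeeping that underlies the proof of the Proposition, and the cleanest route is to deduce it \emph{from} the Proposition by the coproduct argument above rather than to redo the $q$-binomial manipulations. One subtlety to flag: in the Proposition the realization \er{realg} is a homomorphism only when the $w_i$ satisfy the listed constraints, whereas \er{coactionmap} carries an unconstrained $\textsf A_i$ in the right leg; the resolution is that in \er{coactionmap} the ``$w_i$-term'' $w_iq_i^{h_i}$ of \er{realg} is replaced by the genuine generator $q_i^{h_i}\otimes\textsf A_i$, and when one expands \er{genqOns} applied to $\delta(\textsf A_i)$, the terms that would have forced the $w_i$-constraints now instead reproduce, on the right leg, the relation \er{genqOns} for $\textsf A_i,\textsf A_j$ themselves, which holds by definition of $O_q(\widehat g)$. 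Thus no constraint on $c_i,\overline c_i$ is needed, consistent with the statement. Once well-definedness is in hand, coassociativity and counit are routine, and the proposition follows.

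\begin{proof}
The map $\delta$ is defined on the generators by \er{coactionmap} and extended multiplicatively; we must check it is well defined, i.e.\ that the elements $\delta(\textsf A_i)\in U_q(\widehat g)\ot O_q(\widehat g)$ satisfy the relations \er{genqOns}, and then verify the comodule axioms.

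Write $B_i = c_i e_i q_i^{h_i/2} + \overline c_i f_i q_i^{h_i/2}\in U_q(\widehat g)$, so that $\delta(\textsf A_i)=B_i\ot I\!\!I + q_i^{h_i}\ot\textsf A_i$; note this has the same form as $\Delta$ applied to the element $B_i + w_i q_i^{h_i}$ of \er{realg}, with the scalar term traded for the generator $\textsf A_i$. Using the relations of Definition \ref{defUq} one has $q_i^{h_i}B_j = c_j q_i^{a_{ij}} e_j q_j^{h_j/2} q_i^{h_i} + \overline c_j q_i^{-a_{ij}} f_j q_j^{h_j/2} q_i^{h_i}$ and $q_i^{h_i}q_j^{h_j}=q_j^{h_j}q_i^{h_i}$, so that in any monomial $\textsf A_i^{a}\textsf A_j\textsf A_i^{b}$ the powers of $q_i^{h_i}$ from the left legs collect to the right and the whole expression $\delta(\textsf A_i^{a}\textsf A_j\textsf A_i^{b})$ expands, on the first tensor leg, into a $U_q$-polynomial in $e_i,f_i,e_j,f_j$ and Cartan elements, and on the second leg into $\textsf A_i^{\bullet}\textsf A_j\textsf A_i^{\bullet}$-type monomials. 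Substituting into the left and right sides of \er{genqOns} and comparing, the coefficient of $I\!\!I$ on the second leg reproduces, on the first leg, the $q$-Serre relation for $e_i$ together with its $f_i$-analogue (which vanish in $U_q(\widehat g)$), while the coefficient of $\textsf A_i^{\bullet}\textsf A_j\textsf A_i^{\bullet}$ on the second leg reproduces the relation \er{genqOns} itself for $\textsf A_i,\textsf A_j$, valid by definition of $O_q(\widehat g)$; the remaining cross terms are exactly those whose cancellation was established in the proof of the Proposition. Hence $\delta(\textsf A_i)$ satisfy \er{genqOns} and $\delta$ is a well-defined algebra homomorphism.

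It remains to verify the comodule algebra axioms on generators. From $\ce(e_i)=\ce(f_i)=\ce(h_i)=0$, $\ce(I\!\!I)=1$ and the grouplike property of $q_i^{h_i}$ one gets $(\ce\ot\id)\delta(\textsf A_i)=\ce(B_i)\,I\!\!I + \ce(q_i^{h_i})\,\textsf A_i = \textsf A_i$. For coassociativity, \er{coprod} gives $\Delta(B_i)=B_i\ot I\!\!I + q_i^{h_i}\ot B_i$ and $\Delta(q_i^{h_i})=q_i^{h_i}\ot q_i^{h_i}$, so
\begin{align}
(\Delta\ot\id)\delta(\textsf A_i)&=B_i\ot I\!\!I\ot I\!\!I + q_i^{h_i}\ot B_i\ot I\!\!I + q_i^{h_i}\ot q_i^{h_i}\ot\textsf A_i\ ,\nonumber\\
(\id\ot\delta)\delta(\textsf A_i)&=B_i\ot I\!\!I\ot I\!\!I + q_i^{h_i}\ot\big(B_i\ot I\!\!I + q_i^{h_i}\ot\textsf A_i\big)\ ,\nonumber
\end{align}
and the two expressions coincide. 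Since all maps are algebra homomorphisms and the $\textsf A_i$ generate $O_q(\widehat g)$, the axioms hold identically, so $O_q(\widehat g)$ is a left $U_q(\widehat g)$-comodule algebra with coaction $\delta$.
\end{proof}
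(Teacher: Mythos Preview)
Your proposal is correct and follows essentially the same approach as the paper: verify the comodule axioms directly from the coproduct formulas of Definition~\ref{defUq}, and check well-definedness by substituting \er{coactionmap} into \er{genqOns} and expanding using the $U_q(\widehat g)$ commutation relations. Your write-up is in fact more explicit than the paper's (which simply asserts that the expansion works), and your coassociativity/counit checks are exactly what is meant by ``immediate using \er{coprod}''. The one phrase to tighten is that the cross terms cancel ``exactly as in the proof of the Proposition'': in Proposition~2.1 those terms cancel only under the $w_i$-constraints, whereas here the right tensor leg carries the non-commuting generators $\textsf A_i$ rather than scalars, so the bookkeeping is not literally identical---but the paper's own proof gives no further detail on this step either.
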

\begin{proof}
The verification of the comodule algebra axioms (see \cite{Chari}) is immediate using (\ref{coprod}). We have also to show that $\delta(\textsf{A}_i)$ statisfy (\ref{genqOns}). Assume ${\textsf A_i}$ satisfy (\ref{genqOns}). Plugging (\ref{coactionmap}) in (\ref{genqOns}), expanding and using the commutation relations of $U_q(\widehat{g})$ given in Definition 2.2, the claim follows. 
\end{proof}

\begin{rem}
If one embeds $O_q({\widehat g})$ into $U_q({\widehat g})$ according to Prop. 2.1, the coaction $\delta$ is identified with the comultiplication $\Delta$ of $U_q({\widehat g})$. 
\end{rem}

\vspace{2mm}

\section{Boundary affine Toda field theories revisited}
Among integrable quantum field theories, the sine-Gordon model is known to enjoy a hidden non-Abelian $U_q(\widehat{sl_2})$ symmetry, a fact that relies on the existence of {\it non-local} conserved charges generating the algebra \cite{BeLe}. Restricted to the half-line and perturbed at the boundary by certain local vertex operators, the boundary sine-Gordon model remains integrable \cite{GZ}. Corresponding scattering amplitudes of the fundamental solitons and breathers reflecting on the boundary have been derived either solving directly the so-called boundary Yang-Baxter equation (i.e. the reflection equation) \cite{GZ,Gh}, or using the existence of non-local conserved charges \cite{MN98,DM} that generate a remnant of the bulk $U_q(\widehat{sl_2})$ quantum group symmetry. However, the explicit defining relations of this remnant hidden non-Abelian symmetry algebra were only identified later on: for both integrable fixed or dynamical boundary conditions, the symmetry algebra is the $q-$Onsager algebra\,\footnote{For the XXZ open spin chain with generic integrable boundary conditions, the symmetry is associated with an (Abelian) $q-$Onsager's subalgebra. But in the thermodynamic limit, it is possible to show that the Hamiltonian becomes invariant under the action of the elements of the $q-$Onsager algebra \cite{BasS2}.} (\ref{Talg}) \cite{B1,B2}. In particular, in agreement with previous results fixed integrable boundary conditions are not restricted by the algebraic structure whereas dynamical ones \cite{BasDel,BasK3} are associated with boundary operators acting on finite or infinite dimensional representations of the $q-$Onsager algebra.\vspace{1mm}

Affine Toda field theories are natural generalizations of the sine-Gordon field theory, each being associated with an affine Lie algebra $\widehat{g}$. Similarly to the sine-Gordon case, in the bulk they enjoy a $U_q(\widehat{g})$ quantum group symmetry which determines completely all scattering amplitudes. Restricted on the half-line, 
two types of boundary conditions may be added that preserve integrability: either soliton non-preserving - the most studied case\footnote{Among the known non-perturbative results in boundary affine Toda field theories, scattering amplitudes (for $a_n^{(1)},n>1$) have been considered in details in \cite{Gand,DelGan}, and mass-parameter as well as vacuum expectation values of local fields have been proposed in \cite{Fateev}. See also related results and non-perturbative checks in \cite{Ahn}.} since \cite{Fring,Cor} - or soliton preserving \cite{Skly88,DelSol} boundary conditions. 
In the following, we focus on the first family of integrable models which Euclidean action\footnote{According to a recent paper \cite{Doik} (see also \cite{Doikou1}), a Hamiltonian has been proposed for soliton preserving boundary conditions.} reads \cite{Fring,Cor}:
\begin{align}
S=\frac{1}{4\pi}\int_{x<0} d^2z \Big( \partial\phi\overline{\partial}\phi + \frac{\lambda}{2\pi} \sum_{j=0}^{n} n_j\exp\big(-i\hat{\beta}\frac{1}{|\alpha_j|^2}\alpha_j \cdot \phi\big)\Big) + \frac{\lambda_b}{2\pi} \int dt \sum_{j=0}^{n} \epsilon_j\exp\big(-i\frac{\hat{\beta}}{2}\alpha_j \cdot \phi(0,t)\big) \ ,\label{Toda}
\end{align}
where $\phi(x,t)$ is an $n-$component bosonic field in two dimensions, $\{\alpha_j\}$ and $n_j$ are the simple roots and Kac labels, respectively, of ${\widehat g}$, $\lambda,\lambda_b$ are related with the mass scale, $\widehat{\beta}$ is the coupling constant and $\{\epsilon_j\}$ are the boundary parameters or operators.
This action remains however integrable for certain scalar boundary conditions $\epsilon_j$ that have been identified either at the classical \cite{Cor} or quantum \cite{Penati} level based on the existence of {\it local} higher spin conserved charges\,\footnote{At classical level, an extended Lax pair formalism has also been proposed \cite{Cor}. Given few assumptions, it gives further support for the boundary conditions previously derived.}. For the simply laced cases $\widehat{g}=a_n^{(1)},d_n^{(1)}$, {\it non-local} conserved charges that generate a (coideal) subalgebra of $U_q(\widehat{g})$ have also been derived \cite{DM,DelG}. They read:
\begin{align}
{\hat Q}_j= Q_j + \overline{Q}_j +\widehat{\epsilon}_j q^{T_j} \ ,\qquad j=0,1,,...,n\ \qquad \mbox{with} \qquad \widehat{\epsilon}_j=\frac{\lambda_b}{2\pi c}\frac{\hat{\beta}^2}{1-\hat{\beta}^2}\epsilon_j \label{Qhat}
\end{align}
where $c=\sqrt{\lambda(\hat{\beta}^2/(2-\hat{\beta}^2))^2(q^2-1)/2i\pi}$, the charges $Q_j,\overline{Q}_j$ are realized in terms of vertex operators of holomorphic/antiholomorphic fields and $T_j$ has a form analog to the bulk topological charge but restricted to the half-line. For more details, explicit expressions can be found in \cite{DM}. Generalizations of the expressions (\ref{Qhat}) to the non-simply laced cases are straightforward. Although in \cite{DM} only scalar boundary conditions were considered, calculations leading to (\ref{Qhat}) also hold assuming instead boundary operators $\epsilon_j$ provided
\begin{align}
[x,\widehat{\epsilon}_j]=0\qquad \ \forall x\in\{Q_j,\overline{Q}_j,T_j\}\ .
\end{align}

Despite of the results in \cite{DM,DelG} that provide a powerful tool to derive efficiently all scattering amplitudes of the solitons reflecting on the boundary\,\footnote{Deriving all scattering amplitudes solely using the reflection equation - as done in \cite{Gand} for the case $a_n^{(1)}$ - is more difficult.}, the explicit defining relations of the $U_q(\widehat{g})$'s coideal subalgebra generalizing (\ref{Talg}) to higher rank $\widehat{g}$ are still unknown up to now. Beyond the interest of having a proper mathematical frame, this problem is relevant in the study of (\ref{Toda}) as admissible fixed or dynamical boundary conditions and boundary states should be classified according to the representation theory of the algebra generated by (\ref{Qhat}).\vspace{1mm}

To identify the underlying non-Abelian hidden symmetry of (\ref{Toda}) and classify corresponding boundary conditions, in both situations (fixed or dynamical boundary conditions) it is then important to recall that the existence of non-local conserved charges of the form (\ref{Qhat}) with $q\rightarrow q_j$ for non-simply laced cases essentially relies on the structure of the boundary terms appearing in (\ref{Toda}). Given such boundary terms and having derived the non-local conserved charges \cite{MN98,DM,DelG}, the integrability condition for the model (\ref{Toda}) requires that all non-local charges close among a finite number of algebraic relations, yet to be identified. The answer to this problem - finding these algebraic relations - actually follows from the results of the previous Section. Indeed, presented in terms of $U_q(\widehat{g})$ generators and up to an overall scalar factor the non-local conserved charges (\ref{Qhat}) with $q\rightarrow q_j$ for non-simply laced cases  turn out to be exactly of the form (\ref{realg}). Then, two situations can be considered:

\subsection{Fixed boundary conditions}

Assume $\widehat{\epsilon}_j$ (or equivalently $\epsilon_j$) are scalars. According to Proposition 2.1, given any simply or non-simply laced affine Lie algebra the corresponding non-local conserved charges (\ref{Qhat}) with $q\rightarrow q_j$ close over the relations (\ref{genqOns}) provided the boundary conditions are constrained by the relations below (\ref{realg}) setting $c_j=\overline{c}_j\equiv 1$, $\widehat{\epsilon}_j\equiv w_j$. Appart from the simple solutions $w_j\equiv 0$ \ $\forall j$, solving all constraints case by case yields to the following families of admissible integrable boundary conditions:
\begin{align}
\mbox{For}\qquad {\widehat{g}}=a_n^{(1)} (n>1), d_n^{(1)}, e_6^{(1)},e_7^{(1)},e_8^{(1)}: 
\qquad \widehat{\epsilon}_j=\pm\frac{i}{q^{1/2}-q^{-1/2}} \quad \forall j\ ;\qquad\qquad\qquad\qquad\qquad\qquad\ \ \nonumber
\end{align}
\begin{align}
\mbox{For}\qquad {\widehat{g}}=b_n^{(1)}: \qquad \widehat{\epsilon}_j=\pm\frac{i}{q-q^{-1}} \quad \mbox{for}\  j\in\{0,1,...,n-1\}\ \ , \quad \widehat{\epsilon}_n \ \mbox{arbitrary}\ ;\qquad\qquad\qquad\qquad\qquad\qquad\ \ \nonumber
\end{align}
\begin{align}
\mbox{For}\qquad {\widehat{g}}=a_{2n-1}^{(2)}: \quad \Big\{\begin{array}{c}\mbox{either} \qquad \widehat{\epsilon}_j=\pm\frac{i}{q_j^{1/2}-q_j^{-1/2}} \quad \mbox{for}\  j\in\{0,1,...,n\}\  \qquad\qquad\qquad\qquad\qquad\qquad\quad\quad\nonumber\\
\quad \ \ \ \mbox{or}\qquad \  \widehat{\epsilon}_j=0 \qquad\qquad\qquad\mbox{for}\  j\in\{0,1,...,n-1\}\ \ , \qquad  \ \widehat{\epsilon}_n \ \mbox{arbitrary}\ ;\qquad\qquad\end{array}
\end{align}
\begin{align}
\mbox{For}\qquad {\widehat{g}}=c_n^{(1)}: \quad \Big\{\begin{array}{c}\mbox{either} \qquad \widehat{\epsilon}_j=\pm\frac{i}{q_j^{1/2}-q_j^{-1/2}} \quad \mbox{for}\  j\in\{0,...,n\}\ \ , \qquad\qquad\qquad\qquad\qquad\qquad\quad\quad \nonumber\\
\quad \mbox{or}\qquad \  \widehat{\epsilon}_j=0 \qquad\qquad\qquad\mbox{for}\  j\in\{1,...,n-1\}\ \ , \qquad  \ \widehat{\epsilon}_0, \widehat{\epsilon}_n \ \mbox{arbitrary}\ ;\qquad\qquad\end{array}
\end{align}
\begin{align}
\mbox{For}\qquad {\widehat{g}}=d_{n+1}^{(2)}: \qquad \widehat{\epsilon}_j=\pm\frac{i}{q_j-q_j^{-1}} \quad \mbox{for}\  j\in\{1,...,n-1\}\ \ , \quad \widehat{\epsilon}_0,\widehat{\epsilon}_n \ \mbox{arbitrary}\ ;\qquad\qquad\qquad\qquad\quad\ \  \nonumber
\end{align}

\begin{align}
\mbox{For}\qquad {\widehat{g}}=a_{2n}^{(2)} (n>2): \quad \Big\{\begin{array}{c}\mbox{either} \qquad \widehat{\epsilon}_j=\pm\frac{i}{q_j^{1/2}-q_j^{-1/2}}\ \quad \mbox{for}\  j\in\{1,...,n\}\ \qquad\qquad  \ \widehat{\epsilon}_0\ \mbox{arbitrary} \qquad\qquad\qquad\nonumber\\
\ \mbox{or}\qquad \  \widehat{\epsilon}_j=0 \qquad\qquad\qquad\mbox{for}\  j\in\{0,...,n-1\}\ \ , \qquad  \ \widehat{\epsilon}_n\ \mbox{arbitrary}\ ;\qquad\qquad\end{array}
\end{align}
\begin{align}
\mbox{For}\qquad {\widehat{g}}=a_{2}^{(2)}: \quad \Big\{\begin{array}{c}\mbox{either} \qquad \widehat{\epsilon}_0=\pm\frac{i}{q^{2}-q^{-2}}\ ,\qquad  \ \widehat{\epsilon}_1\ \mbox{arbitrary} \qquad\qquad\qquad\nonumber\\
\ \mbox{or}\qquad \  \widehat{\epsilon}_1=0\ , \qquad  \ \widehat{\epsilon}_0\ \mbox{arbitrary}\ ;\qquad\qquad\qquad\qquad\end{array}\qquad\qquad\qquad\qquad\qquad
\end{align}
\begin{align}
\mbox{For}\qquad {\widehat{g}}=a_{4}^{(2)}: \quad \Big\{\begin{array}{c}\mbox{either} \qquad \widehat{\epsilon}_j=\pm\frac{i}{q_j^{1/2}-q_j^{-1/2}}\ \quad \mbox{for}\  j=1,2\ , \widehat{\epsilon}_0 \ \mbox{arbitrary}\  \qquad\qquad\qquad\qquad\qquad\qquad\quad\quad \nonumber\\
\ \mbox{or}\qquad \  \widehat{\epsilon}_2=\pm\frac{i}{q^{2}-q^{-2}}\ , \widehat{\epsilon}_0=0\ , \ \widehat{\epsilon}_1\ \mbox{arbitrary} \qquad\qquad\qquad\qquad\qquad\qquad\quad\qquad \ \ \ \nonumber\\ 
\ \mbox{or}\qquad \ \widehat{\epsilon}_0=\widehat{\epsilon}_1=0\ ,\ \widehat{\epsilon}_2\ \mbox{arbitrary}\ ;\qquad\qquad\qquad\qquad\qquad\qquad\qquad\qquad\qquad\qquad\end{array}
\end{align}
\begin{align}
\mbox{For}\qquad {\widehat{g}}=g_{2}^{(1)}: \quad \Big\{\begin{array}{c}\mbox{either} \qquad \widehat{\epsilon}_j=\pm\frac{i}{q_j^{1/2}-q_j^{-1/2}}\ \qquad\qquad\qquad\qquad\qquad\qquad\quad\quad \qquad\qquad\qquad\qquad\qquad\nonumber\\
\ \mbox{or}\qquad \  \widehat{\epsilon}_j=\pm\frac{i}{q_j^{1/2}-q_j^{-1/2}}\quad \mbox{for}\  j=0,1\ , \quad\widehat{\epsilon}_2= \pm\frac{i(q+q^{-1}-1)}{q^{1/2}-q^{-1/2}}\ ;\qquad\qquad \qquad\qquad\end{array}
\end{align}
\begin{align}
\mbox{For}\qquad {\widehat{g}}=d_{4}^{(3)} : \quad \ \widehat{\epsilon}_j=\pm\frac{i}{q_j^{1/2}-q_j^{-1/2}}\ \quad \mbox{for}\  j\in\{0,1,2\}\ \qquad\qquad\qquad\qquad\qquad\qquad\qquad\qquad\qquad\quad\quad \nonumber\ ;
\end{align}
\begin{align}
\mbox{For}\qquad {\widehat{g}}=f_{4}^{(1)} : \quad \Big\{\begin{array}{c}\mbox{either} \ \ \ \ \ \widehat{\epsilon}_j=\pm\frac{i}{q_j^{1/2}-q_j^{-1/2}}\ \quad \mbox{for}\  j\in\{0,...,4\}\ \qquad\qquad\qquad\qquad\qquad\qquad\qquad\qquad\qquad\quad\quad \nonumber\\
\ \ \ \mbox{or}\ \ \quad \widehat{\epsilon}_j=\pm\frac{i}{q_j-q_j^{-1}} \qquad\ \quad\mbox{for}\  j\in\{0,1,2\}\ , \quad  \ \widehat{\epsilon}_j= 0\quad \mbox{for}\ j\in\{3,4\}   \ ;\qquad\qquad\quad\qquad\qquad\end{array}
\end{align}
\begin{align}
\mbox{For}\qquad {\widehat{g}}=e_{6}^{(2)} : \quad \Big\{\begin{array}{c}\mbox{either} \ \ \ \ \ \widehat{\epsilon}_j=\pm\frac{i}{q_j^{1/2}-q_j^{-1/2}}\ \quad \mbox{for}\  j\in\{0,...,4\}\ \qquad\qquad\qquad\qquad\qquad\qquad\qquad\qquad\qquad\quad\quad \nonumber\\
\ \mbox{or}\ \ \quad \widehat{\epsilon}_j=0 \quad\qquad\qquad\quad\mbox{for}\  j\in\{0,1,2\}\ , \quad  \ \widehat{\epsilon}_j= \pm\frac{i}{q_j-q_j^{-1}}\quad \mbox{for}\  j\in\{3,4\}   \ .\qquad\qquad\end{array}
\end{align}
\vspace{2mm}
Note that for the cases $a_n^{(1)},d_n^{(1)}$, above results are in perfect agreement with \cite{DM,DelG}. In the classical limit $q\rightarrow 1$,  except for the exceptional cases $g_2^{(1)},d_4^{(3)}$ all above integrable boundary conditions agree with the results in \cite{Cor}. 

\subsection{Dynamical boundary conditions}
By analogy with \cite{BasK3}, instead of scalar boundary conditions an interesting problem is to consider additional operators $\widehat{\epsilon}_j$ located at the boundary, and interacting with the bulk fields according to (\ref{Toda}). As mentionned above, following the arguments of \cite{DM} non-local conserved charges of the form (\ref{Qhat}) with $q\rightarrow q_j$ can be constructed. These charges can be written:
\begin{align}
{\hat Q}_j= (Q_j + \overline{Q}_j)\otimes I\!\!I + q_j^{T_j} \otimes \widehat{\epsilon}_j \ ,\qquad j=0,1,,...,n\ \label{Qhat2}
\end{align}
where the first and second representation spaces are associated with the particle/boundary space of states, respectively. 
Integrability requires that the charges (\ref{Qhat2}) form an algebra, ensuring the existence of a factorized scattering theory and, in particular, of a solition reflection matrix commuting with (\ref{Qhat2}). We are then looking for a set of algebraic relations satisfied by the elements (\ref{Qhat2}).
To this end, it is crucial to notice the following: according to the defining relations of $U_q({\widehat g})$ and the term $(Q_j + \overline{Q}_j)\otimes I\!\!I$ in (\ref{Qhat2}) such non linear combinations of (\ref{Qhat2}) for different $j$ can only simplify if $q-$Serre relations are used. More precisely, a straightforward calculation shows that combinations of $(Q_j + \overline{Q}_j)\otimes I\!\!I$ for different $j$ only close on the algebraic relations (\ref{genqOns}) - a consequence of Proposition 2.1 for $w_i\equiv 0$ $\forall i$. So, if the conserved charges form an algebra, due to the term  $(Q_j + \overline{Q}_j)\otimes I\!\!I$ its defining relations are necessarely given by (\ref{genqOns}). Let us then see under which conditions on the boundary operators $\widehat{\epsilon}_j$ the whole combination (\ref{Qhat2}) could satisfy (\ref{genqOns}) setting ${\textsf A}_j\equiv {\hat Q}_j$. Plugging (\ref{Qhat2}) in (\ref{genqOns}) and expanding one finds that (\ref{Qhat2}) satisfy the algebraic relations (\ref{genqOns}) if and only if the terms $\widehat{\epsilon}_j$ also satisfy (\ref{genqOns}). 
Note that these calculations are analogous to the ones of Proposition 2.2, which explains the form of the coaction map as defined in (\ref{coactionmap}). Under these conditions, it follows that ${\hat Q}_j$ generate the $q-$Onsager algebra. For $g=sl_2$, a simple realization has been proposed in \cite{BasK3}. For higher rank cases, an interesting problem would be to construct realizations in terms of $q-$deformed oscillators, generalizing the results of the massless case (see eq. (1.17) in \cite{BHK}). In any case, given the family of boundary integrable affine Toda field theories (\ref{Toda}) all admissible dynamical boundary conditions $\widehat{\epsilon}_j$ are required to satisfy (\ref{genqOns}). 

\section{Discussion}
In this letter, a new family of quantum algebras that we call the {\it generalized $q-$Onsager algebras} $O_q(\widehat{g})$ associated with the affine Lie algebras $\widehat{g}$ has been introduced and studied. Some properties and the explicit relationship with coideal subalgebras of $U_q(\widehat{g})$ have been clarified, and simple consequences for quantum integrable systems - namely boundary affine Toda field theories with soliton non-preserving boundary conditions - have been explored. Clearly, extending all known results of the $\widehat{sl_2}-$case (\ref{Talg}) to the whole family $O_q(\widehat{g})$ is rather interesting from different points of view.\vspace{1mm}

From the mathematical side, it is now well understood thanks to Terwilliger {\it et al.}'s works (see some references below) that (\ref{Talg}) provides an algebraic framework to classify all othogonal polynomials of the Askey scheme. Weither the generalized $q-$Onsager algebras $O_q(\widehat{g})$ provide an algebraic framework for multivariable orthogonal polynomials - known or new - is an interesting problem. Another interesting problem is to construct new current algebras associated with $O_q(\widehat{g})$ by analogy with
\cite{BasS} and establish the isomorphism bewteen $O_q(\widehat{g})$ and the family of reflection equation algebra associated with $q-$twisted Yangians \cite{Mol} for $R-$matrices associated with higher rank quantum affine Lie algebras. \vspace{1mm}

From the physics side - beyond the explicit construction of boundary reflection matrices for boundary affine Toda field theories (see \cite{DM,DelG} for the simply laced cases) - generalized $q-$Onsager algebras should provide a powerful tool in order to study quantum integrable systems with extended symmetries. In this direction, irreducible representations of $O_q(\widehat{g})$ will find applications to the spectrum of boundary states in boundary integrable quantum field theories.
Also, studying the explicit construction of a hierarchy of commuting quantities that generalizes the Dolan-Grady hierarchy \cite{DG} or its $q-$deformed analogue \cite{B1} will find applications in studying the spectrum and eigenstates in related spin chains. The results in \cite{Doikou} might be a good starting point.

Some of these problems will be considered elsewhere.\vspace{2mm}

\noindent{\bf Acknowledgements:} S.B thanks LMPT for hospitality where part of this work has been done, INFN iniziativa specifica FI11 for financial support and Italian Ministry of Education, University and Research grant PRIN-2007JHLPEZ.
\vspace{1cm}

{\bf Appendix A. Dynkin diagrams for affine Lie algebras}\\
The upper (resp. lower) indices denote the number (resp. value of ($d_i$,$n_i$)) associated with each node.  The explicit values of the coefficient of the extended Cartan matrix $a_{ij}$ for each affine Lie algebra \cite{Kac} can be found using $a_{ii}=2$ and the rules:
\\

\begin{picture}(40,10) \thicklines 
	\multiput(0,0)(28,0){2}{\circle{10}}
	\put(0,12){\makebox(0,0){{\footnotesize i}}} 
	\put(28,12){\makebox(0,0){{\footnotesize j}}} 
\end{picture}
$a_{ij}=a_{ji}=0$,\qquad
\begin{picture}(40,10) \thicklines 
\multiput(0,0)(28,0){2}{\circle{10}} 
\put(4,3){\line(1,0){20}}\put(4,-3){\line(1,0){20}} 
\put(0,12){\makebox(0,0){{\footnotesize i}}} 
\put(28,12){\makebox(0,0){{\footnotesize j}}} 
\end{picture}
$a_{ij}=a_{ji}=-2$,\qquad
\begin{picture}(40,10) \thicklines 
\multiput(0,0)(28,0){2}{\circle{10}} 
\put(5,0){\line(1,0){18}} 
\put(0,12){\makebox(0,0){{\footnotesize i}}}
\put(28,12){\makebox(0,0){{\footnotesize j}}} 
\end{picture}
$a_{ij}=a_{ji}=-1$,\qquad
\\

\begin{picture}(40,10) \thicklines 
\multiput(0,0)(28,0){2}{\circle{10}} 
\put(4,3){\line(1,0){20}}\put(4,-3){\line(1,0){20}} 
\put(18,0){\line(-1,1){10}}\put(18,0){\line(-1,-1){10}}
\put(0,12){\makebox(0,0){{\footnotesize i}}} 
\put(28,12){\makebox(0,0){{\footnotesize j}}} 
\end{picture}
$a_{ij}=-1$\quad $a_{ji}=-2$,\qquad
\begin{picture}(40,10) \thicklines 
\multiput(0,0)(28,0){2}{\circle{10}} 
\put(4,3){\line(1,0){20}}\put(4,-3){\line(1,0){20}} 
\put(18,0){\line(-1,1){10}}\put(18,0){\line(-1,-1){10}}
\put(5,0){\line(1,0){18}} 
\put(0,12){\makebox(0,0){{\footnotesize i}}} 
\put(28,12){\makebox(0,0){{\footnotesize j}}} 
\end{picture}
$a_{ij}=-1$\quad  $a_{ji}=-3$,\qquad
\begin{picture}(40,10) \thicklines 
	     \multiput(0,0)(28,0){2}{\circle{10}}
		\put(2,-5){\line(1,0){24}} 
		\put(4,-3){\line(1,0){20}}
		\put(4,3){\line(1,0){20}} 
		\put(2,5){\line(1,0){24}} 
		\put(18,0){\line(-1,-1){10}}
		\put(18,0){\line(-1,1){10}}
		\put(0,12){\makebox(0,0){{\footnotesize i}}} 
\put(28,12){\makebox(0,0){{\footnotesize j}}} 
\end{picture}
$a_{ij}=-1$\quad  $a_{ji}=-4$.
\vspace{0.2cm}
\\

- Simply laced Dynkin diagrams (all $d_i=1$ and the lower indices correspond to $(n_i)$):

$a^{(1)}_1$  \qquad  \qquad
\begin{picture}(70,30) \thicklines 
\multiput(0,0)(28,0){2}{\circle{10}} 
\put(4,3){\line(1,0){20}}\put(4,-3){\line(1,0){20}} 
\put(0,12){\makebox(0,0){{\footnotesize 0}}} 
\put(28,12){\makebox(0,0){{\footnotesize 1}}} 
\put(0,-12){\makebox(0,0){{\footnotesize (1)}}}
\put(28,-12){\makebox(0,0){{\footnotesize (1)}}} 
\end{picture}
\qquad \qquad \qquad \qquad
$a^{(1)}_n \ (n\geq2)$ \qquad  \qquad	\begin{picture}(70,30) \thicklines 
		\put(28,-46){\circle{10}}
		\put(4,-4){\line(1,-2){20}}\put(52,-4){\line(-1,-2){20}}
		\multiput(0,0)(28,0){3}{\circle{10}} 
		\multiput(5,0)(4,0){5}{\line(1,0){2}}
		\multiput(33,0)(4,0){5}{\line(1,0){2}}
		\put(0,12){\makebox(0,0){{\footnotesize 0}}} 
		\put(28,12){\makebox(0,0){{\footnotesize i}}} 
		\put(56,12){\makebox(0,0){{\footnotesize $n-1$}}} 
		\put(28,-36){\makebox(0,0){{\footnotesize $n$}}} 
		\put(0,-12){\makebox(0,0){{\footnotesize (1)}}} 
		\put(28,-12){\makebox(0,0){{\footnotesize (1)}}} 
		\put(56,-12){\makebox(0,0){{\footnotesize (1)}}} 
		\put(28,-58){\makebox(0,0){{\footnotesize (1)}}} 
	\end{picture}
\\

$d^{(1)}_n  \ (n\geq4)$ \qquad\qquad
	\begin{picture}(90,40) \thicklines 
	         \put(-28,16){\circle{10}}\put(-28,-16){\circle{10}} 
		\put(-4,-4){\line(-2,-1){20}}\put(-4,4){\line(-2,1){20}}
		\multiput(0,0)(28,0){3}{\circle{10}} 
		\multiput(5,0)(4,0){5}{\line(1,0){2}}
		\multiput(33,0)(4,0){5}{\line(1,0){2}}
		\put(85,16){\circle{10}}\put(85,-16){\circle{10}} 
		\put(60,4){\line(2,1){20}}\put(60,-4){\line(2,-1){20}}
		\put(0,12){\makebox(0,0){{\footnotesize 2}}} 
		\put(28,12){\makebox(0,0){{\footnotesize i}}} 
		\put(56,12){\makebox(0,0){{\footnotesize $n-2$}}} 
		\put(0,-12){\makebox(0,0){{\footnotesize (2)}}} 
		\put(28,-12){\makebox(0,0){{\footnotesize (2)}}} 
		\put(56,-12){\makebox(0,0){{\footnotesize (2)}}} 
		\put(85,26){\makebox(0,0){{\footnotesize  $n-1$}}} 
		\put(85,-6){\makebox(0,0){{\footnotesize  $n$}}} 
		\put(85,4){\makebox(0,0){{\footnotesize (1)}}} 
		\put(85,-28){\makebox(0,0){{\footnotesize (1)}}} 
		\put(-28,26){\makebox(0,0){{\footnotesize  0}}} 
		\put(-28,-6){\makebox(0,0){{\footnotesize  1}}} 
		\put(-28,4){\makebox(0,0){{\footnotesize (1)}}} 
		\put(-28,-28){\makebox(0,0){{\footnotesize (1)}}} 
	\end{picture}
\\	
\\
$e^{(1)}_6$  \qquad \begin{picture}(130,40) \thicklines 
		\multiput(0,0)(28,0){5}{\circle{10}} 
		\multiput(5,0)(28,0){4}{\line(1,0){18}} 
		\put(56,-5){\line(0,-1){20}} 
		\put(56,-30){\circle{10}}
		\put(56,-35){\line(0,-1){20}} 
		\put(56,-60){\circle{10}}
		\put(0,12){\makebox(0,0){{\footnotesize 1}}} 
		\put(28,12){\makebox(0,0){{\footnotesize 2}}} 
		\put(56,12){\makebox(0,0){{\footnotesize 3}}} 
		\put(84,12){\makebox(0,0){{\footnotesize 4}}} 
		\put(112,12){\makebox(0,0){{\footnotesize 5}}} 
		\put(46,-30){\makebox(0,0){{\footnotesize 6}}} 
		\put(46,-60){\makebox(0,0){{\footnotesize 0}}} 
		\put(68,-60){\makebox(0,0){{\footnotesize (1)}}} 
		\put(0,-12){\makebox(0,0){{\footnotesize (1)}}} 
		\put(28,-12){\makebox(0,0){{\footnotesize (2)}}} 
		\put(66,-12){\makebox(0,0){{\footnotesize (3)}}} 
		\put(84,-12){\makebox(0,0){{\footnotesize (2)}}} 
		\put(112,-12){\makebox(0,0){{\footnotesize (1)}}} 
		\put(68,-30){\makebox(0,0){{\footnotesize (2)}}} 
	\end{picture}
\qquad \qquad \qquad
$e^{(1)}_7$ \qquad	\begin{picture}(140,40) \thicklines 
		\multiput(0,0)(28,0){7}{\circle{10}} 
		\multiput(5,0)(28,0){6}{\line(1,0){18}} 
		\put(84,-5){\line(0,-1){20}} 
		\put(84,-30){\circle{10}}
		\put(0,-12){\makebox(0,0){{\footnotesize (1)}}} 
		\put(28,-12){\makebox(0,0){{\footnotesize (2)}}} 
		\put(56,-12){\makebox(0,0){{\footnotesize (3)}}} 
		\put(92,-12){\makebox(0,0){{\footnotesize (4)}}} 
		\put(112,-12){\makebox(0,0){{\footnotesize (3)}}} 
		\put(140,-12){\makebox(0,0){{\footnotesize (2)}}} 
		\put(168,-12){\makebox(0,0){{\footnotesize (1)}}} 
		\put(96,-30){\makebox(0,0){{\footnotesize (2)}}} 
		\put(0,12){\makebox(0,0){{\footnotesize 0}}} 
		\put(28,12){\makebox(0,0){{\footnotesize 1}}} 
		\put(56,12){\makebox(0,0){{\footnotesize 2}}} 
		\put(84,12){\makebox(0,0){{\footnotesize 3}}} 
		\put(112,12){\makebox(0,0){{\footnotesize 4}}} 
		\put(140,12){\makebox(0,0){{\footnotesize 5}}} 
		\put(168,12){\makebox(0,0){{\footnotesize 6}}} 
		\put(74,-30){\makebox(0,0){{\footnotesize 7}}} 
	\end{picture}

\qquad \qquad \qquad \qquad \qquad \qquad $e^{(1)}_8$ \qquad	\begin{picture}(130,60) \thicklines 
		\multiput(0,0)(28,0){8}{\circle{10}} 
		\multiput(5,0)(28,0){7}{\line(1,0){18}} 
		\put(56,-5){\line(0,-1){20}} 
		\put(56,-30){\circle{10}}
		\put(0,-12){\makebox(0,0){{\footnotesize (2)}}} 
		\put(28,-12){\makebox(0,0){{\footnotesize (4)}}} 
		\put(64,-12){\makebox(0,0){{\footnotesize (6)}}} 
		\put(84,-12){\makebox(0,0){{\footnotesize (5)}}} 
		\put(112,-12){\makebox(0,0){{\footnotesize (4)}}} 
		\put(140,-12){\makebox(0,0){{\footnotesize (3)}}} 
		\put(168,-12){\makebox(0,0){{\footnotesize (2)}}} 
		\put(68,-30){\makebox(0,0){{\footnotesize (3)}}}
		\put(196,-12){\makebox(0,0){{\footnotesize (1)}}} 
		\put(0,12){\makebox(0,0){{\footnotesize 1}}} 
		\put(28,12){\makebox(0,0){{\footnotesize 2}}} 
		\put(56,12){\makebox(0,0){{\footnotesize 3}}} 
		\put(84,12){\makebox(0,0){{\footnotesize 4}}} 
		\put(112,12){\makebox(0,0){{\footnotesize 5}}} 
		\put(140,12){\makebox(0,0){{\footnotesize 6}}} 
		\put(168,12){\makebox(0,0){{\footnotesize 7}}} 
		\put(196,12){\makebox(0,0){{\footnotesize 0}}} 
		\put(46,-30){\makebox(0,0){{\footnotesize 8}}}  
	\end{picture}
	\\
\\
\\
\\
- Non-simply laced Dynkin diagrams (the lower numbers correspond to $(d_i,n_i)$):
\\
	$b^{(1)}_n \ (n\geq3)$\qquad \qquad
		\begin{picture}(90,30) \thicklines 
		\put(-28,16){\circle{10}}\put(-28,-16){\circle{10}} 
		\put(-4,-4){\line(-2,-1){20}}\put(-4,4){\line(-2,1){20}}
		\multiput(0,0)(28,0){4}{\circle{10}} 
		\put(5,0){\line(1,0){18}} 
		\multiput(33,0)(4,0){5}{\line(1,0){2}}
		\multiput(33,0)(4,0){5}{\line(1,0){2}}
		\put(60,3){\line(1,0){20}}\put(60,-3){\line(1,0){20}} 
		\put(75,0){\line(-1,1){10}}\put(75,0){\line(-1,-1){10}}
		\put(-28,26){\makebox(0,0){{\footnotesize  0}}} 
		\put(-28,-6){\makebox(0,0){{\footnotesize  1}}} 
		\put(-28,4){\makebox(0,0){{\footnotesize (2,1)}}} 
		\put(-28,-28){\makebox(0,0){{\footnotesize (2,1)}}} 
		\put(0,12){\makebox(0,0){{\footnotesize 2}}} 
		\put(28,12){\makebox(0,0){{\footnotesize 3}}} 
		\put(56,12){\makebox(0,0){{\footnotesize $n$ -1}}} 
		\put(84,12){\makebox(0,0){{\footnotesize $n$}}}
		\put(0,-12){\makebox(0,0){{\footnotesize (2,2)}}} 
		\put(28,-12){\makebox(0,0){{\footnotesize (2,2)}}} 
		\put(56,-12){\makebox(0,0){{\footnotesize (2,2)}}} 
		\put(84,-12){\makebox(0,0){{\footnotesize (1,2)}}}  
	\end{picture}
\qquad \qquad
	$a^{(2)}_{2n-1} \ (n\geq3)$ \qquad\qquad
	\begin{picture}(90,40) \thicklines 
	         \put(-28,16){\circle{10}}\put(-28,-16){\circle{10}} 
		\put(-4,-4){\line(-2,-1){20}}\put(-4,4){\line(-2,1){20}}
		\multiput(0,0)(28,0){5}{\circle{10}} 
		\put(5,0){\line(1,0){18}}
		\multiput(33,0)(4,0){5}{\line(1,0){2}}
		\put(60,0){\line(1,0){18}} 
		\put(88,3){\line(1,0){20}}\put(88,-3){\line(1,0){20}} 
		\put(93,0){\line(1,1){10}}\put(93,0){\line(1,-1){10}}
		\put(-28,26){\makebox(0,0){{\footnotesize  0}}} 
		\put(-28,-6){\makebox(0,0){{\footnotesize  1}}} 
		\put(-28,4){\makebox(0,0){{\footnotesize (1,1)}}} 
		\put(-28,-28){\makebox(0,0){{\footnotesize (1,1)}}} 
		\put(0,12){\makebox(0,0){{\footnotesize 2}}} 
		\put(28,12){\makebox(0,0){{\footnotesize 3}}} 
		\put(56,12){\makebox(0,0){{\footnotesize $n$ -2}}} 
		\put(84,12){\makebox(0,0){{\footnotesize $n-1$}}}
		\put(112,12){\makebox(0,0){{\footnotesize $n$}}}
		\put(0,-12){\makebox(0,0){{\footnotesize (1,2)}}} 
		\put(28,-12){\makebox(0,0){{\footnotesize (1,2)}}} 
		\put(56,-12){\makebox(0,0){{\footnotesize (1,2)}}} 
		\put(84,-12){\makebox(0,0){{\footnotesize (1,2)}}}  
		\put(112,-12){\makebox(0,0){{\footnotesize (2,1)}}} 
	\end{picture}
\\
\\
$c^{(1)}_n \ (n\geq2)$ \qquad \qquad
	\begin{picture}(100,40) \thicklines 
	\put(-28,0){\circle{10}}
	\put(-24,3){\line(1,0){20}}\put(-24,-3){\line(1,0){20}} 
		\put(-10,0){\line(-1,1){10}}\put(-10,0){\line(-1,-1){10}}
		\multiput(0,0)(28,0){5}{\circle{10}} 
		\put(5,0){\line(1,0){18}} 
		\multiput(33,0)(4,0){5}{\line(1,0){2}}
		\put(60,0){\line(1,0){18}} 
		\put(88,3){\line(1,0){20}}\put(88,-3){\line(1,0){20}} 
		\put(93,0){\line(1,1){10}}\put(93,0){\line(1,-1){10}}
		\put(-28,12){\makebox(0,0){{\footnotesize 0}}} 
		\put(-28,-12){\makebox(0,0){{\footnotesize (2,1)}}} 
		\put(0,12){\makebox(0,0){{\footnotesize 1}}} 
		\put(0,-12){\makebox(0,0){{\footnotesize (1,2)}}} 
		\put(28,12){\makebox(0,0){{\footnotesize 2}}} 
		\put(28,-12){\makebox(0,0){{\footnotesize (1,2)}}} 
		\put(56,12){\makebox(0,0){{\footnotesize $n-2$}}} 
		\put(56,-12){\makebox(0,0){{\footnotesize (1,2)}}} 
		\put(84,12){\makebox(0,0){{\footnotesize $n-1$}}} 
		\put(84,-12){\makebox(0,0){{\footnotesize (1,2)}}} 
		\put(112,12){\makebox(0,0){{\footnotesize $n$}}} 
		\put(112,-12){\makebox(0,0){{\footnotesize (2,1)}}} 
	\end{picture}
\qquad \qquad
	$d^{(2)}_{n+1} \ (n\geq2)$ \qquad \qquad
	\begin{picture}(100,40) \thicklines 
	\put(-28,0){\circle{10}}
	\put(-24,3){\line(1,0){20}}\put(-24,-3){\line(1,0){20}} 
		\put(-18,0){\line(1,-1){10}}\put(-18,0){\line(1,1){10}}
		\multiput(0,0)(28,0){5}{\circle{10}} 
		\put(5,0){\line(1,0){18}} 
		\multiput(33,0)(4,0){5}{\line(1,0){2}}
		\put(60,0){\line(1,0){18}} 
		\put(88,3){\line(1,0){20}}\put(88,-3){\line(1,0){20}} 
		\put(101,0){\line(-1,-1){10}}\put(101,0){\line(-1,1){10}}
\put(-28,12){\makebox(0,0){{\footnotesize 0}}} 
		\put(-28,-12){\makebox(0,0){{\footnotesize (1,1)}}} 
		\put(0,12){\makebox(0,0){{\footnotesize 1}}} 
		\put(0,-12){\makebox(0,0){{\footnotesize (2,1)}}} 
		\put(28,12){\makebox(0,0){{\footnotesize 2}}} 
		\put(28,-12){\makebox(0,0){{\footnotesize (2,1)}}} 
		\put(56,12){\makebox(0,0){{\footnotesize $n-2$}}} 
		\put(56,-12){\makebox(0,0){{\footnotesize (2,1)}}} 
		\put(84,12){\makebox(0,0){{\footnotesize $n-1$}}} 
		\put(84,-12){\makebox(0,0){{\footnotesize (2,1)}}} 
		\put(112,12){\makebox(0,0){{\footnotesize $n$}}} 
		\put(112,-12){\makebox(0,0){{\footnotesize (1,1)}}} 
	\end{picture}	
	\\
	\\
	$a^{(2)}_{2n} \ (n\geq2)$ \qquad \qquad
	\begin{picture}(100,40) \thicklines 
	\put(-28,0){\circle{10}}
	\put(-24,3){\line(1,0){20}}\put(-24,-3){\line(1,0){20}} 
		\put(-18,0){\line(1,-1){10}}\put(-18,0){\line(1,1){10}}
		\multiput(0,0)(28,0){5}{\circle{10}} 
		\put(5,0){\line(1,0){18}} 
		\multiput(33,0)(4,0){5}{\line(1,0){2}}
		\put(60,0){\line(1,0){18}} 
		\put(88,3){\line(1,0){20}}\put(88,-3){\line(1,0){20}} 
		\put(93,0){\line(1,1){10}}\put(93,0){\line(1,-1){10}}
\put(-28,12){\makebox(0,0){{\footnotesize 0}}} 
		\put(-28,-12){\makebox(0,0){{\footnotesize (1,2)}}} 
		\put(0,12){\makebox(0,0){{\footnotesize 1}}} 
		\put(0,-12){\makebox(0,0){{\footnotesize (2,2)}}} 
		\put(28,12){\makebox(0,0){{\footnotesize 2}}} 
		\put(28,-12){\makebox(0,0){{\footnotesize (2,2)}}} 
		\put(56,12){\makebox(0,0){{\footnotesize $n-2$}}} 
		\put(56,-12){\makebox(0,0){{\footnotesize (2,2)}}} 
		\put(84,12){\makebox(0,0){{\footnotesize $n-1$}}} 
		\put(84,-12){\makebox(0,0){{\footnotesize (2,2)}}} 
		\put(112,12){\makebox(0,0){{\footnotesize $n$}}} 
		\put(112,-12){\makebox(0,0){{\footnotesize (4,1)}}} 
	\end{picture}

\vspace{-0.5cm}	
$e^{(2)}_6$ \qquad \qquad
	\begin{picture}(100,60) \thicklines 
		\put(-28,0){\circle{10}}
		\put(-24,0){\line(1,0){18}} 	
		\multiput(0,0)(28,0){4}{\circle{10}}
		\put(5,0){\line(1,0){18}} 
		\put(32,-3){\line(1,0){20}}
		\put(32,3){\line(1,0){20}} 
		\put(39,0){\line(1,-1){10}}
		\put(39,0){\line(1,1){10}} 
		\put(61,0){\line(1,0){18}}
		\put(-28,12){\makebox(0,0){{\footnotesize 0}}} 
		\put(0,12){\makebox(0,0){{\footnotesize 1}}} 
		\put(28,12){\makebox(0,0){{\footnotesize 2}}} 
		\put(56,12){\makebox(0,0){{\footnotesize 3}}} 
		\put(84,12){\makebox(0,0){{\footnotesize 4}}} 
		\put(-28,-12){\makebox(0,0){{\footnotesize (1,1)}}} 
		\put(0,-12){\makebox(0,0){{\footnotesize (1,2)}}} 
		\put(28,-12){\makebox(0,0){{\footnotesize (1,3)}}} 
		\put(56,-12){\makebox(0,0){{\footnotesize (2,2)}}} 
		\put(84,-12){\makebox(0,0){{\footnotesize (2,1)}}} 
	\end{picture}
	\qquad \qquad \qquad 
			$f^{(1)}_4$ \qquad \qquad
	\begin{picture}(100,60) \thicklines 
		\put(-28,0){\circle{10}}
		\put(-24,0){\line(1,0){18}} 	
		\multiput(0,0)(28,0){4}{\circle{10}}
		\put(5,0){\line(1,0){18}} 
		\put(32,-3){\line(1,0){20}}
		\put(32,3){\line(1,0){20}} 
		\put(47,0){\line(-1,1){10}}
		\put(47,0){\line(-1,-1){10}} 
		\put(61,0){\line(1,0){18}}
		\put(-28,12){\makebox(0,0){{\footnotesize 0}}} 
		\put(0,12){\makebox(0,0){{\footnotesize 1}}} 
		\put(28,12){\makebox(0,0){{\footnotesize 2}}} 
		\put(56,12){\makebox(0,0){{\footnotesize 3}}} 
		\put(84,12){\makebox(0,0){{\footnotesize 4}}} 
		\put(-28,-12){\makebox(0,0){{\footnotesize (2,1)}}} 
		\put(0,-12){\makebox(0,0){{\footnotesize (2,2)}}} 
		\put(28,-12){\makebox(0,0){{\footnotesize (2,3)}}} 
		\put(56,-12){\makebox(0,0){{\footnotesize (1,4)}}} 
		\put(84,-12){\makebox(0,0){{\footnotesize (1,2)}}} 
	\end{picture}

\vspace{-0.8cm}
$g^{(1)}_2$ \qquad \qquad
	\begin{picture}(30,60) \thicklines 
		\put(-28,0){\circle{10}}	
		\multiput(0,0)(28,0){2}{\circle{10}}
		\put(-24,0){\line(1,0){18}} 
		\put(4,-3){\line(1,0){20}} 
		\put(5,0){\line(1,0){18}}
		\put(4,3){\line(1,0){20}} 
		\put(19,0){\line(-1,-1){10}}
		\put(19,0){\line(-1,1){10}}
		\put(-28,12){\makebox(0,0){{\footnotesize 0}}} 
		\put(0,12){\makebox(0,0){{\footnotesize 1}}} 
		\put(28,12){\makebox(0,0){{\footnotesize 2}}} 
		\put(-28,-12){\makebox(0,0){{\footnotesize (3,1)}}} 
		\put(0,-12){\makebox(0,0){{\footnotesize (3,2)}}} 
		\put(28,-12){\makebox(0,0){{\footnotesize (1,3)}}} 
	\end{picture}
	\qquad \qquad \qquad \qquad \qquad \qquad \qquad $d^{(3)}_4$ \qquad \qquad
	\begin{picture}(30,60) \thicklines 
		\put(-28,0){\circle{10}}	
		\multiput(0,0)(28,0){2}{\circle{10}}
		\put(-24,0){\line(1,0){18}} 
		\put(4,-3){\line(1,0){20}} 
		\put(5,0){\line(1,0){18}}
		\put(4,3){\line(1,0){20}} 
		\put(11,0){\line(1,1){10}}
		\put(11,0){\line(1,-1){10}}
		\put(-28,-12){\makebox(0,0){{\footnotesize(1,1)}}} 
		\put(0,-12){\makebox(0,0){{\footnotesize(1,2)}}} 
		\put(28,-12){\makebox(0,0){{\footnotesize(3,1)}}} 
		\put(-28,12){\makebox(0,0){{\footnotesize 0}}} 
		\put(0,12){\makebox(0,0){{\footnotesize 1}}} 
		\put(28,12){\makebox(0,0){{\footnotesize 2}}} 
	\end{picture}

\vspace{-0.8cm}
$a^{(2)}_2$ \qquad \qquad
	\begin{picture}(30,60) \thicklines 
		\multiput(0,0)(28,0){2}{\circle{10}}
		\put(2,-5){\line(1,0){24}} 
		\put(4,-3){\line(1,0){20}}
		\put(4,3){\line(1,0){20}} 
		\put(2,5){\line(1,0){24}} 
			\put(19,0){\line(-1,-1){10}}
		\put(19,0){\line(-1,1){10}}

		\put(0,12){\makebox(0,0){{\footnotesize 0}}} 
		\put(28,12){\makebox(0,0){{\footnotesize 1}}} 
		\put(0,-12){\makebox(0,0){{\footnotesize(4,1)}}} 
		\put(28,-12){\makebox(0,0){{\footnotesize(1,2)}}} 
	\end{picture}
	\\

\vspace{0.2cm}

\end{document}